\providecommand{\U}[1]{\protect\rule{.1in}{.1in}}
\newtheorem{theorem}{Theorem}[section]
\theoremstyle{plain}
\newtheorem{corollary}{Corollary}[section]
\newtheorem{lemma}{Lemma}[section]
\newtheorem{remark}{Remark}[section]
\numberwithin{equation}{section}
\newcommand{\rr}{\mathbb{R}}
\newcommand{\dx}{\mathrm{dx}}
\begin{document}
\title{Sharp second order uncertainty principles}
\author{Cristian Cazacu}
\address{Cristian Cazacu: $^1$Faculty of Mathematics and Computer Science \\
University of Bucha-rest\\
010014 Bucharest, Romania\\
\&
$^2$Gheorghe Mihoc-Caius Iacob Institute of Mathematical\\
Statistics and Applied Mathematics of the Romanian Academy\\
050711 Bucharest, Romania\\
\&
$^3$Member in the research grant no. PN-III-P1-1.1-TE-2019-0456\\ University Politehnica of Bucharest\\ 
060042, Bucharest, Romania
}
\email{cristian.cazacu@fmi.unibuc.ro}
\author{Joshua Flynn}
\address{Joshua Flynn: Department of Mathematics\\
University of Connecticut\\
Storrs, CT 06269, USA}
\email{joshua.flynn@uconn.edu}
\author{Nguyen Lam}
\address{Nguyen Lam: School of Science \& Environment\\
Grenfell Campus, Memorial University of Newfoundland\\
Corner Brook, NL A2H5G4, Canada }
\email{nlam@grenfell.mun.ca}
\thanks{C.C. was partially supported by CNCS-UEFISCDI Romania, Grant No. PN-III-P1-1.1-TE-2019-0456. J.F. was partially supported by a Simons Collaboration grant from the Simons Foundation.}
\date{\today}

\begin{abstract}
We study sharp second order inequalities of Caffarelli-Kohn-Nirenberg type in the euclidian
space $\mathbb{R}^{N}$, where $N$ denotes the dimension. This analysis is equivalent to the study of uncertainty principles for special classes of vector fields.  In particular, we
show that when switching from scalar fields $u: \rr^n\rightarrow \mathbb{C}$ to vector fields of the form $\vec{u}:=\nabla U$ ($U$ being a scalar field) the best constant in the Heisenberg Uncertainty Principle (HUP) increases
from $\frac{N^{2}}{4}$ to $\frac{(N+2)^{2}}{4}$, and the optimal constant in
the Hydrogen Uncertainty Principle (HyUP) improves from $\frac{\left(  N-1\right)
^{2}}{4}$ to $\frac{(N+1)^{2}}{4}$. As a consequence of our results we   
answer to the open question of Maz'ya \cite[Section 3.9]{M} in the case $N=2$
regarding the HUP for divergence free vector fields.

\end{abstract}
\subjclass[2010]{81S07, 26D10, 46E35, 26D15, 58A10}
\keywords{Uncertainty principle, Caffarelli-Kohn-Nirenberg inequalities, sharp constants, optimizers, second order inequalities, divergence-free vector fields, rotational-free vector fields}
\maketitle

\section{Introduction}

In quantum mechanics, the uncertainty principle implies that the position and the
momentum of an object cannot both be measured exactly, at the same time, not even
in theory. It is one of the most important differences between quantum and
classical mechanics. The most well-known formulation of the uncertainty principle is probably the
Heisenberg-Pauli-Weyl Uncertainty Principle (see, e.g., \cite{MR2583992, MR3363447}) (henceforth, HUP for short), which can be stated as the following inequality 
\begin{equation}
\int_{\mathbb{R}^{N}}|\nabla u|^{2}\mathrm{dx}\int_{\mathbb{R}^{N}}%
|x|^{2}|u|^{2}\mathrm{dx}\geq\mu^{\star}(N)\left(  \int_{\mathbb{R}^{N}%
}|u|^{2}\mathrm{dx}\right)  ^{2}, \quad \mu^{\star}(N):=\frac{N^{2}}{4},\label{HUP}%
\end{equation}
which holds for any function $u\in C_{c}^{\infty}(\mathbb{R}^{N})$ (i.e, the space of smooth compactly supported functions) but it can also be extended to functions $u$ in the Schwartz space $\mathcal{S}(\mathbb{R}^{N})$ or in appropriate Sobolev spaces. 

 The physical meaning of \eqref{HUP} asserts that 
if $u$ is normalized to $\left\lVert u \right\rVert_{2}=1$, since $p=-i\nabla$ denotes the momentum operator, then the position $\left\lVert x u \right\rVert_{2}$ and the momentum $\left\lVert \nabla u \right\rVert_{2}=\left\lVert p u \right\rVert_{2}$  cannot be small enough simultaneously because of the lower bound $\left\lVert pu \right\rVert_{2} \left\lVert xu \right\rVert_{2}\geq\frac{N}{2}$.
Also, using the Plancherel theorem, \eqref{HUP} implies that a function $u$ and its Fourier transform $\hat{u}$ may not be sharply localized at the origin simultaneously (i.e., a function is sharply localized/concentrated at the origin  if its support can be contained in a small neighborhood of the origin) since (assuming $\left\lVert  u \right\rVert_{2}=1$)
$$\int_{\rr^N} 4\pi^2 |\xi|^2|\hat{u}(\xi)|^2 d\xi \int_{\rr^N} |x|^2 |u(x)|^2 \mathrm{dx} \geq \mu^\star(N), \quad \hat{u}(\xi):=\int_{\rr^N} u(x)e^{-2\pi i x\cdot \xi } \dx. $$
  It is well-known that the constant $\mu^{\star}(N)$ is optimal, i.e. the largest
constant which validates inequality \eqref{HUP} (see, e.g., \cite{FS97}). Moreover, equality in \eqref{HUP} is not attained in the space
$C_{c}^{\infty}(\mathbb{R}^{N})$, but in a slightly larger space which is the
Schwartz space $\mathcal{S}(\mathbb{R}^{N})$ and the minimizers are the
Gaussian profiles of the form $u(x)=\alpha e^{-\beta|x|^{2}}$, $\beta>0$ (see
Section \ref{main_results} for further details). In fact the proof of
\eqref{HUP} is straightforward by applying integration by parts combined with
the Cauchy-Buniakovski-Schwarz inequality:
\begin{align*}
\int_{\mathbb{R}^{N}}|u|^{2}\mathrm{dx} &  =\frac{1}{N}\int_{\mathbb{R}^{N}%
}\mathrm{\operatorname{div}}(x)|u|^{2}\mathrm{dx}=-\frac{1}{N}\int
_{\mathbb{R}^{N}}x\cdot\nabla(|u|^{2})\mathrm{dx}=-\frac{2}{N}\operatorname{Re}\int
_{\mathbb{R}^{N}}(x\cdot\nabla u)\overline{u}\mathrm{dx}\\
&  \leq\frac{2}{N}\int_{\mathbb{R}^{N}}|x||\nabla u||u|\mathrm{dx}\leq\frac
{2}{N}\left(  \int_{\mathbb{R}^{N}}|x|^{2}|u|^{2}\mathrm{dx}\right)
^{1/2}\left(  \int_{\mathbb{R}^{N}}|\nabla u|^{2}\mathrm{dx}\right)  ^{1/2}.
\end{align*}
The HUP belongs to larger family of interpolation inequalities called the Caffarelli-Kohn-Nirenberg (CKN) inequalities introduced in \cite{CKN} to study the Navier-Stokes equation and the regularity of particular solutions \cite{CKN2}.
The family of CKN inequalities includes as special cases, among others, the HUP and Hardy's inequality 
\begin{equation}
  \int_{\mathbb{R}^{N}}|\nabla{u}|^{2}\dx\geq\left( \frac{N-2}{2} \right)^{2} \int_{\mathbb{R}^{N}}\frac{|u|^{2}}{|x|^{2}}\dx .
  \label{eq:hardy-uncertainty-principle}
\end{equation}
Observe that \eqref{eq:hardy-uncertainty-principle} is also an uncertainty principle in that localization in $u$ at the origin implies its gradient norm $\left\lVert \nabla u \right\rVert_{2}$ must be large.
 Important contributions concerning sharp constants and their minimizers for CKN inequalities have been highlighted afterwards, see, e.g. \cite{CW}, \cite{CC}, \cite{Co}.  
Recently, in connection with our purpose, we point out some work by Hamamoto et. al. (see \cite{H1}, \cite{H2} and the references therein)  which has been done in the topic of Hardy inequality for vector fields.

Related to the HUP and Hardy's inequality is the so-called Hydrogen Uncertainty Principle (HyUP) that can be stated as follows: for any $u\in C_{c}^{\infty}(\mathbb{R}%
^{N})$, there holds
\begin{equation}
\int_{\mathbb{R}^{N}}|\nabla u|^{2}\mathrm{dx}\int_{\mathbb{R}^{N}}%
|u|^{2}\mathrm{dx}\geq\nu^{\star}(N)\left(  \int_{\mathbb{R}^{N}}\frac
{|u|^{2}}{\left\vert x\right\vert }\mathrm{dx}\right)  ^{2}, \quad \nu^{\star}(N):=\frac{\left(  N-1\right)  ^{2}}{4}.\label{HyUP}%
\end{equation}
The HyUP is related to the ground state of a single electron and single fixed nucleus system (i.e., a hydrogenic atom) with Coulomb interaction.
Indeed, the singular weight $|x|^{-1}$ arises from the Coulomb potential, and, in fact, using \eqref{eq:hardy-uncertainty-principle}, one may show the quantum mechanical energy of the hydrogen atom is finite.
The constant $\nu^{\star}(N)$ in \eqref{HyUP} is also
optimal and the minimizers are of the form $u(x)=\alpha e^{-\beta|x|}$,
$\beta>0$ (see, e.g. \cite{Fra11}). Notice that in this case the minimizers are not in $\mathcal{S}(\rr^N)$ but in a Sobolev space, namely, for our purpose, $W^{2,2}(\rr^N)$.

Uncertainty principles such as HUP and HyUP have several physical and mathematical applications.
As mentioned above, in physics, uncertainty principles may be used for establishing stability of matter.
For example, the HyUP may be used to show the stability of a hydrogenic atom in a magnetic field (e.g., see \cite{MR836003,MR2583992}).
For more general systems (e.g., a many-electron atom or many fermion systems), stronger uncertainty principles are required and used for establishing stability (see \cite{MR0456083}).
In mathematics, uncertainty principles may be used to establish rigidity results (e.g., \cite{MR3862150}), to study variable coefficient partial differential operators (e.g., \cite{MR707957}) such as certain Schr\"odinger operators, and so on.

\medskip

According to Maz'ya \cite[Section 3.9]{M} a related open problem to this
subject concerns finding the best constant in \eqref{HUP} when we replace $u$
by a divergence-free vector field $\vec{U}$. Namely, determine the best
constant say, denoted by $\mu^{\star\ast}(N)$, in the inequality
\begin{equation}
\int_{\mathbb{R}^{N}}|\nabla\vec{U}|^{2}\mathrm{dx}\int_{\mathbb{R}^{N}%
}|x|^{2}|\vec{U}|^{2}\mathrm{dx}\geq\mu^{\star\ast}(N)\left(  \int
_{\mathbb{R}^{N}}|\vec{U}|^{2}\mathrm{dx}\right)  ^{2},\quad\forall\vec{U}\in
C_{c}^{\infty}(\mathbb{R}^{N}),\quad\mathrm{\operatorname{div}}\vec
{U}=0.\label{HUP_div}%
\end{equation}
Notice that $\mu^{\star\star}(N)\geq\mu^{\star}(N)=N^{2}/4$. Indeed, by
applying \eqref{HUP} for each component $U_{i}$, $i=1,...,N$ and in view of
the Cauchy-Buniakovski-Schwarz inequality we successively obtain
\begin{align*}
\int_{\mathbb{R}^{N}}|\nabla\vec{U}|^{2}\mathrm{dx}\int_{\mathbb{R}^{N}%
}|x|^{2}|\vec{U}|^{2}\mathrm{dx} &  =\left(  \sum_{i=1}^{N}\int_{\mathbb{R}%
^{N}}|\nabla U_{i}|^{2}\mathrm{dx}\right)  \left(  \sum_{i=1}^{N}%
\int_{\mathbb{R}^{N}}|x|^{2}|U_{i}|^{2}\mathrm{dx}\right)  \\
&  \geq\left(  \sum_{i=1}^{N}\sqrt{\int_{\mathbb{R}^{N}}|\nabla U_{i}%
|^{2}\mathrm{dx}\int_{\mathbb{R}^{N}}|x|^{2}|U_{i}|^{2}\mathrm{dx}}\right)
^{2}\\
&  \geq\left(  \frac{N}{2}\int_{\mathbb{R}^{N}}|U_{i}|^{2}\mathrm{dx}\right)
^{2}=\frac{N^{2}}{4}\left(  \int_{\mathbb{R}^{N}}|\vec{U}|^{2}\mathrm{dx}%
\right)  ^{2}.
\end{align*}
Similarly, if we replace $u$ by a divergence-free vector field $\vec{U}$ then
$\nu^{\star\star}(N)\geq\frac{\left(  N-1\right)  ^{2}}{4}$. Here $\nu
^{\star\star}(N)$ is the best constant of the following inequality
\begin{equation}
\int_{\mathbb{R}^{N}}|\nabla\vec{U}|^{2}\mathrm{dx}\int_{\mathbb{R}^{N}}%
|\vec{U}|^{2}\mathrm{dx}\geq\nu^{\star\ast}(N)\left(  \int_{\mathbb{R}^{N}%
}\frac{|\vec{U}|^{2}}{\left\vert x\right\vert }\mathrm{dx}\right)  ^{2}%
,\quad\forall\vec{U}\in C_{c}^{\infty}(\mathbb{R}^{N}),\quad
\mathrm{\operatorname{div}}\vec{U}=0.\label{HyUP_div}%
\end{equation}

\paragraph*{\textbf{The case $N=2$}}

It is well-known that in this case a divergence-free vector can be written in
the form $\vec{U}=(-u_{x_{2}}, u_{x_{1}})$ where $u$ is a scalar field. If
$\vec{U}\in C_{c}^{\infty}(\mathbb{R}^{2})$ then also $u\in C_{c}^{\infty
}(\mathbb{R}^{2})$.

Then, after integration by parts we get%

\[
\int_{\mathbb{R}^{2}}|\nabla\vec{U}|^{2}\mathrm{dx}=\int_{\mathbb{R}^{2}%
}|u_{x_{1}x_{1}}|^{2}+|u_{x_{2}x_{2}}|^{2}+2|u_{x_{1}x_{2}}|^{2}%
\mathrm{dx}=\int_{\mathbb{R}^{2}}|\Delta u|^{2}\mathrm{dx}.
\]
Therefore \eqref{HUP_div} is equivalent to

\begin{equation}
\int_{\mathbb{R}^{2}}|\Delta u|^{2}\mathrm{dx}\int_{\mathbb{R}^{2}}%
|x|^{2}|\nabla u|^{2}\mathrm{dx}\geq\mu^{\star\ast}(2)\left(  \int
_{\mathbb{R}^{2}}|\nabla u|^{2}\mathrm{dx}\right)  ^{2},
\label{HUP_div_scalar}%
\end{equation}
and \eqref{HyUP_div} is equivalent to%

\begin{equation}
\int_{\mathbb{R}^{2}}|\Delta u|^{2}\mathrm{dx}\int_{\mathbb{R}^{2}}|\nabla
u|^{2}\mathrm{dx}\geq\nu^{\star\ast}(2)\left(  \int_{\mathbb{R}^{2}}%
\frac{|\nabla u|^{2}}{\left\vert x\right\vert }\mathrm{dx}\right)
^{2}.\label{HyUP_div_scalar}%
\end{equation}

\medskip

The primary goal of this article is to investigate several \emph{
second order uncertainty principles} of the forms (\ref{HUP_div_scalar}%
) and (\ref{HyUP_div_scalar}) providing their sharp constants and
optimizers in any dimension $N\geq1$. Particularly, we determine $\mu
^{\star\star}(2)$ in \eqref{HUP_div} or \eqref{HUP_div_scalar} and answer to the
question of Maz'ya in the case $N=2$.

\section{Main results}

\label{main_results}

Let $\mathcal{S}(\mathbb{R}^{N})$ be the Schwartz space of smooth functions, i.e., those which decay faster than any polynomial at infinity, namely
\[
\mathcal{S}(\mathbb{R}^{N}):=\{u\in C^{\infty}(\mathbb{R}^{N}) \ |
\ \sup_{x\in\mathbb{R}^{N}} |x^{\alpha}D^{\beta}u(x)|<\infty, \quad
\forall\alpha, \beta\text{ multi-indices}\},
\]
where the multi-indices $\alpha$ and $\beta$ are defined by
\[
\alpha=(\alpha_{1}, \ldots, \alpha_{N}), \quad\beta=(\beta_{1}, \ldots,
\beta_{N}), \quad\text{ with }\alpha_{i}, \beta_{i} \in\mathbb{N}, \forall
i=1,\ldots, N.
\]
For $x=(x_{1}, \ldots, x_{N})$, the monomial $x^{\alpha}$ and the partial
derivatives $D^{\beta}u(x)$ are given by
\[
x^{\alpha}:= x_{1}^{\alpha_{1}}\ldots x_{N}^{\alpha_{N}}, \quad D^{\beta
}u(x):=\frac{\partial^{\beta_{1}+\ldots+\beta_{N}}}{\partial x_{1}^{\beta_{1}%
}\ldots x_{N}^{\beta_{N}}}u(x).
\]

In harmonic analysis and PDEs, the Schwartz space is famous for its invariance property under the Fourier transform.
They are also important to state our main results.
In fact, we point out that any functional inequality under consideration in this paper which is
valid for compactly supported functions can be easily extended by density to
functions in the Schwartz space $\mathcal{S}(\mathbb{R}^{N})$. Thus, it makes
sense to study such inequalities in $\mathcal{S}(\mathbb{R}^{N})$.

In connection with inequalities \eqref{HUP_div_scalar}-\eqref{HyUP_div_scalar} we have the following
main results.

\begin{theorem}
\label{th1} Let $N\geq1$ and $u\in\mathcal{S}(\mathbb{R}^{N})$. Then the
following inequality holds
\begin{equation}
\int_{\mathbb{R}^{N}}|\Delta u|^{2}\mathrm{dx}\int_{\mathbb{R}^{N}}%
|x|^{2}|\nabla u|^{2}\mathrm{dx}\geq\frac{(N+2)^{2}}{4}\left(  \int
_{\mathbb{R}^{N}}|\nabla u|^{2}\mathrm{dx}\right)  ^{2},\label{main1}%
\end{equation}
where the constant $\frac{(N+2)^{2}}{4}$ is optimal and it is attained for
Gaussian profiles of the form $u(x)=\alpha e^{-\beta|x|^{2}}$, $\beta>0$,
$\alpha\in\mathbb{%
\mathbb{C}
}$.
\end{theorem}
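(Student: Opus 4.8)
The plan is to rewrite the left-hand side through the full Hessian and to extract its trace-free part, since this is precisely where the improvement from $N$ to $N+2$ originates. Throughout I take $N\ge 2$; the case $N=1$ is genuinely different and I treat it at the end.

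First I would record three integration-by-parts identities, all valid for $u\in\mathcal S(\rr^N)$ by the rapid decay of Schwartz functions. Writing $\partial_k\partial_j u$ for the Hessian entries, integrating $\operatorname{div}(x|\nabla u|^2)$ gives the virial identity
\[
\operatorname{Re}\sum_{j,k}\int_{\rr^N} x_k(\partial_k\partial_j u)\overline{\partial_j u}\,\dx = -\frac{N}{2}\int_{\rr^N}|\nabla u|^2\,\dx ,
\]
integrating by parts twice gives $\int_{\rr^N}|\Delta u|^2\,\dx = \sum_{j,k}\int_{\rr^N}|\partial_k\partial_j u|^2\,\dx$, and one more integration by parts yields the dilation identity
\[
\operatorname{Re}\int_{\rr^N}\overline{\Delta u}\,(x\cdot\nabla u)\,\dx = \frac{N-2}{2}\int_{\rr^N}|\nabla u|^2\,\dx .
\]

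The crucial step is the trace-free decomposition of the Hessian. Set $Z_{kj} := \partial_k\partial_j u - \frac1N\delta_{kj}\Delta u$, the symmetric trace-free Hessian. Substituting $\partial_k\partial_j u = \frac1N\delta_{kj}\Delta u + Z_{kj}$ into the virial identity and eliminating the trace term with the dilation identity, I obtain
\[
\operatorname{Re}\sum_{j,k}\int_{\rr^N} Z_{kj}\,x_k\overline{\partial_j u}\,\dx = -\frac{(N-1)(N+2)}{2N}\int_{\rr^N}|\nabla u|^2\,\dx ,
\]
while the Hessian identity gives the rigidity relation $\int_{\rr^N}|Z|^2\,\dx = \frac{N-1}{N}\int_{\rr^N}|\Delta u|^2\,\dx$. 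Because $Z$ is symmetric and trace-free, in the pairing above only the symmetric trace-free part $W_{kj} := \frac12(x_k\partial_j u + x_j\partial_k u) - \frac1N\delta_{kj}(x\cdot\nabla u)$ of the tensor $x_k\partial_j u$ survives, so Cauchy–Schwarz (in the Frobenius inner product and in $x$) gives
\[
\frac{(N-1)(N+2)}{2N}\int_{\rr^N}|\nabla u|^2\,\dx \le \Big(\int_{\rr^N}|Z|^2\,\dx\Big)^{1/2}\Big(\int_{\rr^N}|W|^2\,\dx\Big)^{1/2}.
\]
A direct computation gives $\int_{\rr^N}|W|^2\,\dx = \frac12\int_{\rr^N}|x|^2|\nabla u|^2\,\dx + \big(\frac12-\frac1N\big)\int_{\rr^N}|x\cdot\nabla u|^2\,\dx$; since $N\ge2$ the coefficient $\frac12-\frac1N$ is nonnegative, so the elementary bound $|x\cdot\nabla u|\le|x||\nabla u|$ yields $\int_{\rr^N}|W|^2\,\dx\le\frac{N-1}{N}\int_{\rr^N}|x|^2|\nabla u|^2\,\dx$. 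Inserting both the rigidity relation and this bound, the two factors $\frac{N-1}{N}$ cancel exactly and leave $\frac{N+2}{2}\int_{\rr^N}|\nabla u|^2\,\dx\le\big(\int_{\rr^N}|\Delta u|^2\,\dx\big)^{1/2}\big(\int_{\rr^N}|x|^2|\nabla u|^2\,\dx\big)^{1/2}$, which squares to \eqref{main1}.

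Finally, for $N=1$ the tensor $Z$ vanishes identically and the argument degenerates; there the substitution $v=u'$ (which satisfies $\int_{\rr}v\,\dx=0$ since $u\in\mathcal S$) reduces \eqref{main1} to the sharp Heisenberg inequality restricted to mean-zero functions, whose optimal constant is $\frac94=\frac{(1+2)^2}{4}$ and whose extremizer is the first Hermite function $v(x)=x e^{-\beta x^2}$, i.e.\ $u$ Gaussian. For optimality in all dimensions I would verify that $u(x)=\alpha e^{-\beta|x|^2}$ turns every inequality above into an equality: then $\nabla u\parallel x$, so $|x\cdot\nabla u|=|x||\nabla u|$, and $Z$ is a pointwise scalar multiple of $W$, which is exactly the equality case of the two Cauchy–Schwarz steps. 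I expect the main obstacle to be conceptual rather than computational: naively applying Cauchy–Schwarz to the full Hessian (or pairing $x\cdot\nabla u$ against $\Delta u$) produces only the weaker constants $\frac{N^2}{4}$ and $\frac{(N-2)^2}{4}$, and one must recognize that the trace (Laplacian) part is the lossy piece, so that the sharp constant $\frac{(N+2)^2}{4}$ emerges only after passing to the trace-free Hessian and matching it against the trace-free part of $x\otimes\nabla u$, where the rigidity $\int_{\rr^N}|Z|^2\,\dx=\frac{N-1}{N}\int_{\rr^N}|\Delta u|^2\,\dx$ supplies the compensating factor.
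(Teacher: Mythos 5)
For $N\ge 2$ your argument is correct, and it is genuinely different from the paper's. The paper decomposes $u$ into spherical harmonics, substitutes $u_k=r^kv_k$, reduces \eqref{main1} mode by mode to one--dimensional weighted inequalities (Lemma \ref{l3.1}), proves each of these by a weighted Hardy inequality plus the substitution $w=rv$ and H\"older, and finally minimizes the resulting constant over the discrete parameter $k$ (Lemma \ref{l3.3}, which requires a case analysis in $N$). You instead work directly with the Hessian: I checked the virial identity, the dilation identity $\operatorname{Re}\int\overline{\Delta u}\,(x\cdot\nabla u)\,\dx=\frac{N-2}{2}\int|\nabla u|^2\dx$, the resulting pairing identity with coefficient $\frac{(N-1)(N+2)}{2N}$, the relation $\int|Z|^2\dx=\frac{N-1}{N}\int|\Delta u|^2\dx$, and the computation $\int|W|^2\dx=\frac12\int|x|^2|\nabla u|^2\dx+(\frac12-\frac1N)\int|x\cdot\nabla u|^2\dx\le\frac{N-1}{N}\int|x|^2|\nabla u|^2\dx$, and all are right; the two factors of $\frac{N-1}{N}$ do cancel to give $\frac{N+2}{2}$. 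Your approach buys a short, coordinate-free proof in which the origin of the improved constant (the lossiness of the trace part) is transparent, and the equality analysis is clean since $Z=-2\beta W$ pointwise for Gaussians; the paper's approach is longer but its machinery (Lemmas \ref{l3.1}--\ref{l3.4}) is reused for Theorems \ref{th2}--\ref{th4} and for Remark \ref{r2.2}, which your method does not obviously reach.

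The one genuine gap is the case $N=1$. There you reduce \eqref{main1} via $v=u'$ to the assertion that the Heisenberg inequality restricted to mean-zero functions holds with sharp constant $\frac94$, and you assert this without proof. But mean-zero Schwartz functions on $\rr$ are exactly the derivatives of Schwartz functions, so that assertion is not a known black box you are invoking; it is word for word the statement you are trying to prove, and the reduction is circular. Note also that the constraint $\int v=0$ is not orthogonality to the ground state of the harmonic oscillator, so the constant $\frac94$ does not follow from the usual min--max/second-eigenvalue heuristic and really needs an argument. The paper closes this case in a few lines by expanding $\int_{-\infty}^{\infty}|u''+x(ru)'|^2\,dr\ge0$ as a quadratic in the real parameter $x$ and taking the discriminant; you should substitute an argument of this kind (or any direct proof of the mean-zero Heisenberg inequality) for your $N=1$ paragraph.
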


\begin{remark}
Theorem \ref{th1} gives a response to the Maz'ya question in the case $N=2$
since the best constant in \eqref{HUP_div} increases from $1$ to $4$ with
respect to the best constant in \eqref{HUP}. Also, this improves the second order inequalities in \cite{CC}. 
\end{remark}

Now, we denote $\mathcal{R}_{1}:=\frac{x}{|x|}\cdot\nabla$ and $\mathcal{R}%
_{2}:=\mathcal{R}_{1}^{2}+\frac{N-1}{r}\mathcal{R}_{1}$ where $\mathcal{R}%
_{1}^{2}:=\mathcal{R}_{1}\circ\mathcal{R}_{1}$, which represent the radial
derivative, and the radial Laplacian, respectively. We claim

\begin{theorem}
\label{th2}Let $N\geq1$ and $u\in\mathcal{S}(\mathbb{R}^{N})$. Then the
following inequality holds
\begin{equation}
\int_{\mathbb{R}^{N}}|\mathcal{R}_{2}u|^{2}\mathrm{dx}\int_{\mathbb{R}^{N}%
}|x|^{2}|\mathcal{R}_{1}u|^{2}\mathrm{dx}\geq\frac{(N+2)^{2}}{4}\left(
\int_{\mathbb{R}^{N}}|\mathcal{R}_{1}u|^{2}\mathrm{dx}\right)  ^{2}%
.\label{main2}%
\end{equation}
Moreover, \eqref{main2} is optimal with the best constant $\frac{(N+2)^{2}}%
{4}$, which is achieved for Gaussian profiles of the form $u(x)=\alpha
e^{-\beta|x|^{2}}$, $\beta>0$, $\alpha\in\mathbb{%
\mathbb{C}
}$.
\end{theorem}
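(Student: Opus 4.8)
The plan is to \emph{reduce Theorem \ref{th2} to Theorem \ref{th1}} by slicing in the angular variable, exploiting that $\mathcal{R}_{1}$ and $\mathcal{R}_{2}$ act only in the radial direction. Writing $x=r\omega$ with $r=|x|$ and $\omega\in S^{N-1}$, we have $\mathcal{R}_{1}u=\partial_{r}u$ and $\mathcal{R}_{2}u=\partial_{r}^{2}u+\frac{N-1}{r}\partial_{r}u$. For each fixed direction $\omega$ I would introduce the \emph{radial} function $F_{\omega}(y):=u(|y|\omega)$ on $\mathbb{R}^{N}$. A direct computation gives
\[
|\nabla F_{\omega}(y)|=\big|(\mathcal{R}_{1}u)(|y|\omega)\big|,\qquad \Delta F_{\omega}(y)=(\mathcal{R}_{2}u)(|y|\omega),
\]
so that applying Theorem \ref{th1} to $F_{\omega}$ and cancelling the common surface factor $|S^{N-1}|$ produces, for a.e. $\omega$, the one-dimensional inequality $A(\omega)B(\omega)\geq\frac{(N+2)^{2}}{4}C(\omega)^{2}$, where
\[
A(\omega)=\int_{0}^{\infty}\!|(\mathcal{R}_{2}u)(r\omega)|^{2}r^{N-1}dr,\quad B(\omega)=\int_{0}^{\infty}\!r^{2}|(\mathcal{R}_{1}u)(r\omega)|^{2}r^{N-1}dr,\quad C(\omega)=\int_{0}^{\infty}\!|(\mathcal{R}_{1}u)(r\omega)|^{2}r^{N-1}dr.
\]

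Taking square roots yields $\sqrt{A(\omega)B(\omega)}\geq\frac{N+2}{2}C(\omega)$; integrating over $S^{N-1}$ and using the Cauchy--Schwarz inequality in $\omega$,
\[
\Big(\int_{S^{N-1}}\!A\,d\sigma\Big)^{1/2}\Big(\int_{S^{N-1}}\!B\,d\sigma\Big)^{1/2}\geq\int_{S^{N-1}}\!\sqrt{AB}\,d\sigma\geq\frac{N+2}{2}\int_{S^{N-1}}\!C\,d\sigma.
\]
Since $\int_{S^{N-1}}A\,d\sigma=\int_{\mathbb{R}^{N}}|\mathcal{R}_{2}u|^{2}dx$, and analogously for $B$ and $C$, squaring recovers precisely \eqref{main2}. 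Sharpness and the identification of optimizers then come for free: the radial Gaussians $u(x)=\alpha e^{-\beta|x|^{2}}$ already give equality in Theorem \ref{th1}, and being independent of $\omega$ they render $A,B,C$ constant in $\omega$, so the Cauchy--Schwarz step is an equality as well; hence \eqref{main2} is saturated and $\frac{(N+2)^{2}}{4}$ cannot be enlarged.

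The main obstacle is the application of Theorem \ref{th1} to the slices $F_{\omega}$, which are radial but generally \emph{not} Schwartz: from $u(r\omega)=u(0)+r(\nabla u(0)\cdot\omega)+O(r^{2})$ one sees that $F_{\omega}$ carries a conical singularity at the origin, and for $N=2$ the quantity $A(\omega)$ may even diverge when $\nabla u(0)\cdot\omega\neq0$. I would dispose first of the trivial case $\int_{\mathbb{R}^{N}}|\mathcal{R}_{2}u|^{2}dx=\infty$, where \eqref{main2} holds automatically; when this integral is finite, Fubini gives $A(\omega),B(\omega),C(\omega)<\infty$ for a.e. $\omega$, and I would legitimize Theorem \ref{th1} on such radial $F_{\omega}$ by truncating near $0$ and $\infty$, mollifying, and passing to the limit in the (finite) energies. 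A fully self-contained alternative that bypasses this density point is to prove the sliced inequality directly: after the substitution $\phi=r^{(N-1)/2}\mathcal{R}_{1}u$ one verifies the completing-the-square identity
\[
A(\omega)+t^{2}B(\omega)-t(N+2)C(\omega)=\int_{0}^{\infty}\Big|\big(\partial_{r}-\tfrac{1}{r}+tr\big)\mathcal{R}_{1}u\Big|^{2}r^{N-1}dr\geq0\qquad(\forall\,t>0),
\]
and then optimizes over $t>0$; this simultaneously delivers the constant $\frac{(N+2)^{2}}{4}$ and pins down the extremals through the first-order ODE $(\partial_{r}-\tfrac1r+tr)\mathcal{R}_{1}u=0$, whose Schwartz solutions are exactly the Gaussians.
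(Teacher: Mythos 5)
Your argument is correct, but it is organized differently from the paper's. The paper proves Theorem \ref{th2} by expanding $u$ in spherical harmonics (Lemma \ref{l3.2}), so that the three integrals become discrete sums of radial integrals indexed by the degree $k$; Cauchy--Schwarz for sums then reduces the problem to the single one-dimensional inequality \eqref{4.0} with $k=0$ applied to each $u_k'$, which was already established in the proof of Theorem \ref{th1} via the substitution $w=rv$. You instead decompose continuously, slicing over directions $\omega\in\mathbb{S}^{N-1}$ and using Cauchy--Schwarz in $\omega$; this is structurally parallel (orthogonal decomposition replaced by Fubini) and is arguably more natural here, since $\mathcal{R}_1$ and $\mathcal{R}_2$ act only radially and no spherical harmonics are needed. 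Your target inequality $A(\omega)B(\omega)\geq\frac{(N+2)^2}{4}C(\omega)^2$ is exactly the same one-dimensional inequality the paper isolates, since $A(\omega)=\int_0^\infty\bigl(|g'|^2r^{N-1}+(N-1)|g|^2r^{N-3}\bigr)dr$ with $g=\partial_r u(\cdot\,\omega)$; your completing-the-square identity is a correct alternative derivation of it (I checked the coefficients: the cross terms integrate by parts to $(N-2)\int|g|^2r^{N-3}$ and $-(N+2)tC$, matching your claim), whereas the paper obtains it through the change of variable $w=rv$ followed by H\"older. Two cautions: (i) your primary route, applying Theorem \ref{th1} literally to the slices $F_\omega$, does have the gap you identify (the slices are not Schwartz, and for $N=2$ with $\nabla u(0)\cdot\omega\neq0$ even the hypotheses fail), so the proof should rest on the direct identity, with the boundary terms at $r=0$ justified by the observation that finiteness of $A(\omega)$ forces $g(0)=0$ when $N=2$; the paper faces the same borderline issue and resolves it inside Lemmas \ref{l3.1}--\ref{l3.2}. (ii) You should say a word about $N=1$, where the slicing over $\mathbb{S}^{0}$ degenerates; as in the paper, that case is immediate because $|\mathcal{R}_1u|=|u'|$ and $\mathcal{R}_2u=u''$, so \eqref{main2} coincides with \eqref{main1}. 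Your treatment of sharpness (equality in both the radial step and the angular Cauchy--Schwarz for radial Gaussians, pinned down by the first-order ODE) is sound and slightly more informative than the paper's ``easy to check'' verification.
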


\begin{remark}
\label{r2.2}The spherical harmonics decomposition implemented in the proof of
Theorem \ref{th1} and Theorem \ref{th2} (more specifically, in Lemmas \ref{l3.1}-\ref{l3.2}) shows that
\[
\int_{\mathbb{R}^{N}}|\Delta u|^{2}\mathrm{dx}\geq\int_{\mathbb{R}^{N}%
}|\mathcal{R}_{2}u|^{2}\mathrm{dx},\quad\forall u\in \mathcal{S}%
(\mathbb{R}^{N}),
\]
unless $N=2$. Thus, when $N\neq2$, in view of \eqref{main2} we also get
\begin{equation}
\int_{\mathbb{R}^{N}}|\Delta u|^{2}\mathrm{dx}\int_{\mathbb{R}^{N}}%
|x|^{2}|\mathcal{R}_{1}u|^{2}\mathrm{dx}\geq\frac{(N+2)^{2}}{4}\left(
\int_{\mathbb{R}^{N}}|\mathcal{R}_{1}u|^{2}\mathrm{dx}\right)  ^{2}%
.\label{main3}%
\end{equation}
Also, \eqref{main3} is optimal with the best constant $\frac{(N+2)^{2}}{4}$,
which is achieved for Gaussian profiles of the form $u(x)=\alpha
e^{-\beta|x|^{2}}$, $\beta>0$, $\alpha\in\mathbb{%
\mathbb{C}
}$.
\end{remark}

\begin{remark}
As a consequence of Theorems \ref{th1}-\ref{th2}, since $\left\vert \nabla
u\right\vert \geq\left\vert \mathcal{R}_{1}u\right\vert $ pointwise, we
trivially obtain the following optimal inequalities%
\begin{equation}
\int_{\mathbb{R}^{N}}|\mathcal{R}_{2}u|^{2}\mathrm{dx}\int_{\mathbb{R}^{N}%
}|x|^{2}|\nabla u|^{2}\mathrm{dx}\geq\frac{(N+2)^{2}}{4}\left(  \int
_{\mathbb{R}^{N}}|\mathcal{R}_{1}u|^{2}\mathrm{dx}\right)  ^{2}%
,\label{sec_main2}%
\end{equation}%
\begin{equation}
\int_{\mathbb{R}^{N}}|\Delta u|^{2}\mathrm{dx}\int_{\mathbb{R}^{N}}%
|x|^{2}|\nabla u|^{2}\mathrm{dx}\geq\frac{(N+2)^{2}}{4}\left(  \int
_{\mathbb{R}^{N}}|\mathcal{R}_{1}u|^{2}\mathrm{dx}\right)  ^{2},\text{ }%
N\neq2\text{.}\label{sec_main4}%
\end{equation}
Moreover, inequalities \eqref{sec_main2}-\eqref{sec_main4} are optimal with
the best constant $\frac{(N+2)^{2}}{4}$, which is achieved for Gaussian
profiles of the form $u(x)=\alpha e^{-\beta|x|^{2}}$, $\beta>0$, $\alpha
\in\mathbb{%
\mathbb{C}
}$.
\end{remark}

Concerning HyUP, we will prove that

\begin{theorem}
\label{th3}Let $N\geq5$ and $u\in W^{2,2}\left(  \mathbb{R}^{N}\right)  $.
Then the following inequality holds
\begin{equation}
\int_{\mathbb{R}^{N}}|\Delta u|^{2}\mathrm{dx}\int_{\mathbb{R}^{N}}|\nabla
u|^{2}\mathrm{dx}\geq\frac{(N+1)^{2}}{4}\left(  \int_{\mathbb{R}^{N}}%
\frac{|\nabla u|^{2}}{\left\vert x\right\vert }\mathrm{dx}\right)
^{2}\label{main4}%
\end{equation}
where the constant $\frac{(N+1)^{2}}{4}$ is optimal and it is attained for
functions of the form $u(x)=\alpha\left(  1+\beta\left\vert x\right\vert
\right)  e^{-\beta|x|}$, $\beta>0$, $\alpha\in\mathbb{%
\mathbb{C}
}$.
\end{theorem}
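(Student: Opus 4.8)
The plan is to diagonalize the problem by a spherical harmonics decomposition and then reduce each mode, via two explicit substitutions, to the classical first order HyUP \eqref{HyUP} in a shifted dimension. Write $u=\sum_{k\ge0}\sum_{m}f_{k,m}(r)Y_{k,m}(\omega)$, where the $Y_{k,m}$ form an orthonormal basis of $L^2(S^{N-1})$ with $-\Delta_{S^{N-1}}Y_{k,m}=\lambda_k Y_{k,m}$ and $\lambda_k=k(k+N-2)$. Using the orthonormality of the $Y_{k,m}$ and of their spherical gradients (so that no cross terms survive), the three integrals in \eqref{main4} split diagonally as sums over the modes of
\[
A_{k,m}=\int_0^\infty\Big|f''+\tfrac{N-1}{r}f'-\tfrac{\lambda_k}{r^2}f\Big|^2 r^{N-1}\,\mathrm{d}r,\quad B_{k,m}=\int_0^\infty\Big(|f'|^2+\tfrac{\lambda_k}{r^2}|f|^2\Big)r^{N-1}\,\mathrm{d}r,
\]
\[
C_{k,m}=\int_0^\infty\Big(|f'|^2+\tfrac{\lambda_k}{r^2}|f|^2\Big)r^{N-2}\,\mathrm{d}r,
\]
with $f=f_{k,m}$. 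By the same Cauchy--Schwarz argument used in the Introduction for divergence-free fields, $(\sum_j A_j)(\sum_j B_j)\ge(\sum_j\sqrt{A_jB_j})^2$, it suffices to prove the one-dimensional per-mode inequality $A_{k,m}B_{k,m}\ge\frac{(N+1)^2}{4}C_{k,m}^2$.

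Next I would fix a mode and substitute $f=r^{k}\phi$. A direct computation shows that this conjugates the mode-$k$ radial Laplacian into the honest radial Laplacian in the shifted dimension $N_k:=N+2k$: one gets $A_{k,m}=\int_0^\infty|\Delta_{N_k}\phi|^2 r^{N_k-1}\,\mathrm{d}r$ and $B_{k,m}=\int_0^\infty|\phi'|^2 r^{N_k-1}\,\mathrm{d}r$, i.e. the Rellich and Dirichlet energies in dimension $N_k$, while
\[
C_{k,m}=\int_0^\infty|\phi'|^2 r^{N_k-2}\,\mathrm{d}r+k\int_0^\infty|\phi|^2 r^{N_k-4}\,\mathrm{d}r.
\]
The core step is then the bound $A_{k,m}B_{k,m}\ge\frac{(N_k+1)^2}{4}\big(\int_0^\infty|\phi'|^2 r^{N_k-2}\,\mathrm{d}r\big)^2$, which I would establish by the further substitution $\phi'=r\psi$: an integration by parts cancels the resulting cross term and turns $A_{k,m}$ into $\int_0^\infty|\psi'|^2 r^{N_k+1}\,\mathrm{d}r$ and the other two factors into the $L^2$- and $|x|^{-1}$-weighted norms of $\psi$ in dimension $N_k+2$. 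This is exactly \eqref{HyUP} in dimension $N_k+2$, whose sharp constant is $\frac{(N_k+1)^2}{4}$. For the radial mode $k=0$ (so $N_k=N$) this already proves \eqref{main4}, and unwinding the substitutions identifies the extremal $\psi=e^{-\beta|x|}$ with the optimizer $u=\alpha(1+\beta|x|)e^{-\beta|x|}$.

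The main obstacle is the extra term $k\int_0^\infty|\phi|^2 r^{N_k-4}\,\mathrm{d}r$ that appears in $C_{k,m}$ when $k\ge1$, which must be absorbed into the surplus in the constant, $\frac{(N_k+1)^2}{4}>\frac{(N+1)^2}{4}$. I would control it by the one-dimensional Hardy inequality $\int_0^\infty|\phi'|^2 r^{N_k-2}\,\mathrm{d}r\ge\big(\tfrac{N_k-3}{2}\big)^2\int_0^\infty|\phi|^2 r^{N_k-4}\,\mathrm{d}r$, which gives $C_{k,m}\le\big(1+\tfrac{4k}{(N_k-3)^2}\big)\int_0^\infty|\phi'|^2 r^{N_k-2}\,\mathrm{d}r$. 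Combining this with the core step, the per-mode inequality follows from the elementary condition $(N+2k-3)^2\ge2(N+1)$. This holds for all $k\ge1$ precisely when $N\ge5$, the binding case being $k=1$ where the condition reads $N^2-4N-1\ge0$; this is exactly where the hypothesis $N\ge5$ enters. Since only the radial mode is active for the extremal profile, the Cauchy--Schwarz step across modes is saturated there and the constant $\frac{(N+1)^2}{4}$ is sharp and attained.
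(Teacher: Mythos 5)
Your proposal is correct and follows essentially the same route as the paper: spherical harmonics decomposition, the substitution $u_k=r^k v_k$, a per-mode Cauchy--Schwarz reduction, the one-dimensional Hardy inequality to absorb the extra term $k\int_0^\infty r^{N+2k-4}|v_k|^2\,dr$, and the further substitution $v_k'=rv$ reducing the core product inequality to the first-order HyUP in the shifted dimension $N+2k+2$. The only notable difference is that you verify the mode comparison through the elementary equivalent condition $(N+2k-3)^2\geq 2(N+1)$, which is a cleaner derivation of the paper's Lemma \ref{l3.4} (proved there by a calculus monotonicity analysis), with the same binding case $k=1$ forcing $N\geq 5$.
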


\begin{remark}
We conjecture that (\ref{main4}) still holds for $2\leq N\leq4$.
\end{remark}

\begin{theorem}
\label{th4}Let $N\geq2$ and $u\in W^{2,2}\left(  \mathbb{R}^{N}\right)  $.
Then the following inequality holds
\begin{equation}
\int_{\mathbb{R}^{N}}|\mathcal{R}_{2}u|^{2}\mathrm{dx}\int_{\mathbb{R}^{N}%
}|\mathcal{R}_{1}u|^{2}\mathrm{dx}\geq\frac{(N+1)^{2}}{4}\left(
\int_{\mathbb{R}^{N}}\frac{|\mathcal{R}_{1}u|^{2}}{\left\vert x\right\vert
}\mathrm{dx}\right)  ^{2}.\label{main5}%
\end{equation}
Also, \eqref{main5} is optimal with the best constant $\frac{(N+1)^{2}}{4}$,
which is achieved for functions of the form $u(x)=\alpha\left(  1+\beta
\left\vert x\right\vert \right)  e^{-\beta|x|}$, $\beta>0$, $\alpha\in\mathbb{%
\mathbb{C}
}$.
\end{theorem}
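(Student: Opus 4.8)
The plan is to reduce \eqref{main5} to a one–dimensional weighted inequality and then to recognize the latter as the radial form of the Hydrogen Uncertainty Principle \eqref{HyUP} in the fictitious dimension $N+2$. Since $\mathcal{R}_1$ and $\mathcal{R}_2$ are purely radial operators, no spherical harmonic decomposition (as in the proofs of Theorems \ref{th1}--\ref{th2}) is needed here. In polar coordinates $x=r\omega$ one has $\mathcal{R}_1u=\partial_r u=:g(r,\omega)$ and $\mathcal{R}_2u=\partial_r g+\frac{N-1}{r}g$, so the three integrals in \eqref{main5} split as $\int_{S^{N-1}}A_\omega\,d\sigma$, $\int_{S^{N-1}}B_\omega\,d\sigma$, $\int_{S^{N-1}}C_\omega\,d\sigma$, where $A_\omega=\int_0^\infty|\partial_r g+\frac{N-1}{r}g|^2 r^{N-1}\,dr$, $B_\omega=\int_0^\infty|g|^2 r^{N-1}\,dr$ and $C_\omega=\int_0^\infty|g|^2 r^{N-2}\,dr$. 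If the one–dimensional inequality $A_\omega B_\omega\ge\frac{(N+1)^2}{4}C_\omega^2$ holds for a.e.\ fixed $\omega$, then the Cauchy--Schwarz inequality over $S^{N-1}$ (exactly the device used for \eqref{HUP_div} in the Introduction) gives $(\int A_\omega)(\int B_\omega)\ge(\int\sqrt{A_\omega B_\omega})^2\ge\frac{(N+1)^2}{4}(\int C_\omega)^2$, which is \eqref{main5}.

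First I would carry out the key substitution $p:=g/r$, which factors out the linear vanishing at the origin exhibited by the extremal profile. A direct computation gives $\mathcal{R}_2u=\partial_r g+\frac{N-1}{r}g=r(\partial_r p+\frac{N}{r}p)$, and, after one integration by parts that annihilates the cross term, the dimension--raising identity
\[
\int_0^\infty\Big|\partial_r g+\tfrac{N-1}{r}g\Big|^2 r^{N-1}\,dr=\int_0^\infty|\partial_r p|^2\,r^{N+1}\,dr.
\]
Together with $B_\omega=\int_0^\infty|p|^2 r^{N+1}\,dr$ and $C_\omega=\int_0^\infty|p|^2 r^{N}\,dr$, this turns the one–dimensional inequality into
\[
\int_0^\infty|\partial_r p|^2 r^{N+1}\,dr\cdot\int_0^\infty|p|^2 r^{N+1}\,dr\ge\frac{(N+1)^2}{4}\Big(\int_0^\infty|p|^2 r^{N}\,dr\Big)^2,
\]
which is precisely the radial HyUP \eqref{HyUP} in dimension $N+2$ (whose sharp constant is $((N+2)-1)^2/4=(N+1)^2/4$). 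I would then prove this last inequality by the standard two–line argument: integrating by parts, $\frac{N+1}{2}\int_0^\infty|p|^2 r^{N}\,dr=-\operatorname{Re}\int_0^\infty\overline{\partial_r p}\,p\,r^{N+1}\,dr$, followed by Cauchy--Schwarz. Tracking the chain of substitutions, equality forces $p=e^{-\beta r}$, hence $g$ proportional to $re^{-\beta r}$ and $u(x)=\alpha(1+\beta|x|)e^{-\beta|x|}$; for such radial $u$ the quantities $A_\omega,B_\omega,C_\omega$ are independent of $\omega$, so the spherical Cauchy--Schwarz step is also an equality. This simultaneously yields the optimality of $(N+1)^2/4$ and the stated extremals.

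The main obstacle is analytic rather than algebraic: justifying, for a general $u\in W^{2,2}(\rr^N)$, that the substitution $p=g/r$ produces admissible functions and that the integrations by parts (the one behind the dimension–raising identity and the one in the radial HyUP) carry no boundary contributions at $r=0$ or $r=\infty$. Near the origin this is exactly where the hypothesis $N\ge2$ enters, as it guarantees the finiteness of $\int_{\rr^N}|\mathcal{R}_1u|^2/|x|\,\dx=\int_{S^{N-1}}C_\omega\,d\sigma$ and the vanishing of the boundary term $|p|^2 r^{N}$ as $r\to0$. I would dispose of this by first establishing the inequality on a convenient dense class (smooth, suitably decaying $u$ with $\mathcal{R}_1u$ vanishing to first order at the origin) and then passing to the general case by density, using dominated/monotone convergence to control the weighted integrals.
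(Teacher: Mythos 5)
Your proposal is correct, and its analytic core is identical to the paper's: the one--dimensional inequality you arrive at, namely $\bigl(\int_0^\infty|f''|^2r^{N-1}dr+(N-1)\int_0^\infty|f'|^2r^{N-3}dr\bigr)\int_0^\infty|f'|^2r^{N-1}dr\geq\frac{(N+1)^2}{4}\bigl(\int_0^\infty|f'|^2r^{N-2}dr\bigr)^2$, is exactly \eqref{ineq_key_1} with $k=0$, and your proof of it (set $g=f'$, substitute $p=g/r$, integrate by parts to get the dimension--raising identity $\int|g'+\frac{N-1}{r}g|^2r^{N-1}dr=\int|p'|^2r^{N+1}dr$, then apply the radial HyUP in the fictitious dimension $N+2$ via one more integration by parts and Cauchy--Schwarz) is word for word the computation the paper carries out inside the proof of Theorem \ref{th3}, including the identification of the extremals $p=e^{-\beta r}$, hence $u=\alpha(1+\beta|x|)e^{-\beta|x|}$. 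Where you differ is only in how you reduce \eqref{main5} to that one--dimensional statement: the paper expands $u$ in spherical harmonics and invokes Lemma \ref{l3.2} (observing that the per--mode inequality has the same form for every $k$), whereas you slice ray by ray, fixing $\omega\in\mathbb{S}^{N-1}$ and using the integral Cauchy--Schwarz inequality over the sphere. Your reduction is legitimate here precisely because $\mathcal{R}_1$ and $\mathcal{R}_2$ involve only radial derivatives, so the three integrals genuinely disintegrate over rays; it is arguably more economical than the harmonics expansion for this theorem (though the harmonics machinery is unavoidable for Theorems \ref{th1} and \ref{th3}, where $\Delta$ couples the angular variables). The one point that deserves the care you already flag is the boundary term $r^{N-2}|g(r,\omega)|^2$ at $r=0$ in the dimension--raising identity: for $N=2$ it does not vanish unless $g(0,\omega)=\omega\cdot\nabla u(0)=0$, but in that case $A_\omega=+\infty$ and the per--ray inequality is vacuous; this is the same phenomenon the paper absorbs into the $N=2$ discussion at the end of Lemma \ref{l3.2}, so your argument closes.
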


\medskip

\paragraph{\bf HUP and HyUP for 1-forms.} Our main results in Theorems \ref{th1}-\ref{th3} could be fashionably stated in terms of differential forms. 
For instance, inequality \eqref{main1} is equivalent to inequality
\eqref{HUP_div} when switching from divergence-free vector fields to
generalized "rotational-free" vector fields. This can be easily expressed in
terms of differential forms as follows.

\paragraph*{\textbf{Exact/closed 1-forms}}

First note that a vector field $\vec{U}:\mathbb{R}^{N}\rightarrow\mathbb{C}^{N}$ can be
interpreted as a 1-form. We say that $\vec{U}$ is an \emph{exact
1-form} if there exists a scalar function $u: \mathbb{R}^{N}\rightarrow
\mathbb{C}$ (0-form) such that $\vec{U}=\nabla u$. In particular, $\vec{U}$
is \emph{a closed form}, i.e. $\text{d}\vec{U}=0$, where $\mathrm{d}$ is the
exterior derivative. The space $\mathbb{R}^{N}$ is contractible to a point and
by the Poincar\'{e} lemma (see, e.g. \cite[Cor. 18, pp. 225]{S}) it follows
that $\vec{U}$ is exact form if and only $\vec{U}$ is a closed form. More
detailed necessary information on the formalism of differential forms can be
extracted from \cite{CK}.

Hence, as a consequence of Theorem \ref{th1} we have

\begin{corollary}
\label{cor1} Assume that $\vec{U}$ is a smooth closed 1-form and compactly
supported. Then the following inequality holds:
\begin{equation}
\int_{\mathbb{R}^{N}}|\nabla\vec{U}|^{2}\mathrm{dx}\int_{\mathbb{R}^{N}%
}|x|^{2}|\vec{U}|^{2}\mathrm{dx}\geq\frac{(N+2)^{2}}{4}\left(  \int
_{\mathbb{R}^{N}}|\vec{U}|^{2}\mathrm{dx}\right)  ^{2}. \label{ineq2}%
\end{equation}

\end{corollary}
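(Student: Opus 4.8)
The plan is to reduce \eqref{ineq2} to Theorem \ref{th1} applied to a scalar potential for $\vec U$. Since $\mathbb{R}^{N}$ is contractible and $\vec U$ is a smooth closed $1$-form, the Poincar\'e lemma recalled above furnishes a scalar field $u:\mathbb{R}^{N}\to\mathbb{C}$ with $\vec U=\nabla u$. Because $\vec U$ is compactly supported, $\nabla u$ vanishes outside some ball $B_{R}$; as the complement of $B_{R}$ is connected (this is the relevant setting $N\geq 2$), $u$ equals a constant $c$ there. Replacing $u$ by $u-c$ alters neither $\nabla u=\vec U$ nor any integral appearing in \eqref{ineq2}, so we may assume $u\in C_{c}^{\infty}(\mathbb{R}^{N})\subset\mathcal{S}(\mathbb{R}^{N})$, which puts us in the hypotheses of Theorem \ref{th1}.

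Next I would rewrite every quantity in \eqref{ineq2} in terms of $u$. Writing $U_{i}=\partial_{i}u$ gives $|\vec U|^{2}=|\nabla u|^{2}$ pointwise, hence $\int_{\mathbb{R}^{N}}|\vec U|^{2}\,\mathrm{dx}=\int_{\mathbb{R}^{N}}|\nabla u|^{2}\,\mathrm{dx}$ and $\int_{\mathbb{R}^{N}}|x|^{2}|\vec U|^{2}\,\mathrm{dx}=\int_{\mathbb{R}^{N}}|x|^{2}|\nabla u|^{2}\,\mathrm{dx}$. The gradient of the vector field is just the Hessian of $u$, i.e. $|\nabla\vec U|^{2}=\sum_{i,j}|\partial_{i}\partial_{j}u|^{2}=|D^{2}u|^{2}$.

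The key step, and the only one carrying genuine analytic content, is the identity
\[
\int_{\mathbb{R}^{N}}|\nabla\vec U|^{2}\,\mathrm{dx}=\int_{\mathbb{R}^{N}}|D^{2}u|^{2}\,\mathrm{dx}=\int_{\mathbb{R}^{N}}|\Delta u|^{2}\,\mathrm{dx},
\]
valid for compactly supported smooth $u$. I would establish it by integrating by parts twice: for each pair $(i,j)$ move $\partial_{i}$ off the first factor of $\int\partial_{i}\partial_{j}u\,\overline{\partial_{i}\partial_{j}u}\,\mathrm{dx}$, sum over $i$ to produce $\overline{\partial_{j}\Delta u}$, then integrate by parts in $x_{j}$ and sum over $j$ to land on $\int_{\mathbb{R}^{N}}|\Delta u|^{2}\,\mathrm{dx}$; all boundary terms vanish since $u$ has compact support. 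This is precisely the equality already observed for $N=2$ in the discussion preceding \eqref{HUP_div_scalar}, now carried out in arbitrary dimension. I expect this integration-by-parts bookkeeping to be the main (though routine) obstacle, together with the preliminary care needed in subtracting the constant $c$ so that the potential is genuinely admissible.

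Finally, combining these identities with Theorem \ref{th1} yields
\[
\int_{\mathbb{R}^{N}}|\nabla\vec U|^{2}\,\mathrm{dx}\int_{\mathbb{R}^{N}}|x|^{2}|\vec U|^{2}\,\mathrm{dx}=\int_{\mathbb{R}^{N}}|\Delta u|^{2}\,\mathrm{dx}\int_{\mathbb{R}^{N}}|x|^{2}|\nabla u|^{2}\,\mathrm{dx}\geq\frac{(N+2)^{2}}{4}\Big(\int_{\mathbb{R}^{N}}|\nabla u|^{2}\,\mathrm{dx}\Big)^{2}=\frac{(N+2)^{2}}{4}\Big(\int_{\mathbb{R}^{N}}|\vec U|^{2}\,\mathrm{dx}\Big)^{2},
\]
which is exactly \eqref{ineq2}. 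Note that sharpness transfers as well: the Gaussian optimizers $u(x)=\alpha e^{-\beta|x|^{2}}$ of Theorem \ref{th1} correspond to the exact $1$-forms $\vec U=\nabla u$ saturating \eqref{ineq2}, so the constant $\tfrac{(N+2)^{2}}{4}$ is optimal here too.
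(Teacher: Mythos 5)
Your proposal is correct and follows essentially the same route as the paper: obtain a compactly supported scalar potential $u$ with $\vec U=\nabla u$ via the Poincar\'e lemma, verify $\int_{\mathbb{R}^{N}}|\nabla\vec U|^{2}\,\mathrm{dx}=\int_{\mathbb{R}^{N}}|\Delta u|^{2}\,\mathrm{dx}$ by integrating by parts twice, and invoke Theorem \ref{th1}. Your extra care in subtracting the constant value of $u$ near infinity (so that $u$ is genuinely compactly supported) is a detail the paper passes over silently, but it does not change the argument.
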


Indeed, since $\vec{U}$ is also exact 1-form there exists a scalar function
$u\in C_{c}^{\infty}(\mathbb{R}^{N})$ such that $\vec{U}=\nabla u$. Computing
after integrations by parts we have
\[
\int_{\mathbb{R}^{N}}|\nabla\vec{U}|^{2}\mathrm{dx}=\sum_{i=1}^{N}%
\int_{\mathbb{R}^{N}}|\nabla u_{x_{i}}|^{2}\mathrm{dx}=\sum_{i,j=1}^{N}%
\int_{\mathbb{R}^{N}}|u_{x_{i}x_{j}}|^{2}\mathrm{dx}=\int_{\mathbb{R}^{N}%
}|\Delta u|^{2}\mathrm{dx}.
\]
Then \eqref{main1} reduces to \eqref{ineq2} and Corollary \ref{cor1} is
proven. \newline

As a consequence of Corollary \ref{cor1} we obtain

\begin{corollary}
[rotational-free vector fields]\label{cor2} Assume that $N=3$ and $\vec{U}$ is
a smooth compactly supported vector field such that $\emph{curl}\vec{U}=0$.
Then
\[
\int_{\mathbb{R}^{3}}|\nabla\vec{U}|^{2}\mathrm{dx}\int_{\mathbb{R}^{3}%
}|x|^{2}|\vec{U}|^{2}\mathrm{dx}\geq\frac{25}{4}\left(  \int_{\mathbb{R}^{3}%
}|\vec{U}|^{2}\mathrm{dx}\right)  ^{2}.
\]

\end{corollary}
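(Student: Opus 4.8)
The plan is to reduce the statement to Corollary \ref{cor1} by recognizing that, in dimension three, the curl-free condition is exactly the statement that $\vec{U}$, viewed as a $1$-form, is closed. First I would recall the standard dictionary between vector calculus and the exterior calculus on $\mathbb{R}^{3}$: writing $\vec{U}=(U_{1},U_{2},U_{3})$ as the $1$-form $\omega=U_{1}\,\mathrm{d}x_{1}+U_{2}\,\mathrm{d}x_{2}+U_{3}\,\mathrm{d}x_{3}$, the exterior derivative $\mathrm{d}\omega$ is the $2$-form whose coefficients are precisely the components of $\operatorname{curl}\vec{U}$. Hence the hypothesis $\operatorname{curl}\vec{U}=0$ is equivalent to $\mathrm{d}\omega=0$, i.e. $\vec{U}$ is a smooth, compactly supported, \emph{closed} $1$-form. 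This is the point where the restriction $N=3$ enters: only in dimension three does the single vector equation $\operatorname{curl}\vec{U}=0$ encode the full closedness condition $\partial_{i}U_{j}=\partial_{j}U_{i}$ for all $i,j$.

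With this identification in hand, Corollary \ref{cor1} applies verbatim with $N=3$ and yields
\[
\int_{\mathbb{R}^{3}}|\nabla\vec{U}|^{2}\,\mathrm{dx}\int_{\mathbb{R}^{3}}|x|^{2}|\vec{U}|^{2}\,\mathrm{dx}\geq\frac{(3+2)^{2}}{4}\left(\int_{\mathbb{R}^{3}}|\vec{U}|^{2}\,\mathrm{dx}\right)^{2}=\frac{25}{4}\left(\int_{\mathbb{R}^{3}}|\vec{U}|^{2}\,\mathrm{dx}\right)^{2},
\]
which is exactly the claim. If one prefers to argue directly rather than cite Corollary \ref{cor1}, the Poincar\'e lemma on the contractible space $\mathbb{R}^{3}$ produces a scalar potential with $\vec{U}=\nabla u$; since $\vec{U}$ is compactly supported and $\mathbb{R}^{3}\setminus\operatorname{supp}\vec{U}$ is connected, $u$ is constant outside a compact set and may be normalized so that $u\in C_{c}^{\infty}(\mathbb{R}^{3})$, after which Theorem \ref{th1} gives the bound through the identity $\int|\nabla\vec{U}|^{2}\,\mathrm{dx}=\int|\Delta u|^{2}\,\mathrm{dx}$.

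There is essentially no analytic obstacle here: the entire content is the topological fact, furnished by the Poincar\'e lemma, that closed $1$-forms on the contractible space $\mathbb{R}^{3}$ are exact, together with the elementary observation that compact support of $\nabla u$ forces $u$ itself to be compactly supported once the complement of the support is connected. The only step requiring a moment's care is the normalization of the potential $u$ by its boundary constant; everything quantitative, including optimality of $\frac{25}{4}$ and the characterization of extremizers as Gaussian profiles, is inherited unchanged from the sharp inequality \eqref{main1}.
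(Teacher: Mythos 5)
Your proof is correct and follows the same route as the paper: identify the curl-free condition in dimension three with closedness of the associated $1$-form and invoke Corollary \ref{cor1} with $N=3$. Your additional remark justifying that the scalar potential $u$ can be normalized to lie in $C_{c}^{\infty}(\mathbb{R}^{3})$ (using connectedness of the complement of the support) is a detail the paper leaves implicit, but it does not change the argument.
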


Indeed, in the case $N=3$ it holds $\text{d}\vec{U}=\text{curl}\vec{U}=0$ i.e.
$\vec{U}$ is closed.\newline

Similarly, we can rewrite Theorem \ref{th3} as 
\begin{corollary}
	\label{cor3} Assume that $N\geq 5$ and  $\vec{U}$ is a smooth closed 1-form and compactly
	supported. Then the following inequality holds
	\begin{equation}
	\int_{\mathbb{R}^{N}}|\nabla\vec{U}|^{2}\mathrm{dx}\int_{\mathbb{R}^{N}%
	}|\vec{U}|^{2}\mathrm{dx}\geq\frac{(N+1)^{2}}{4}\left(  \int
	_{\mathbb{R}^{N}}\frac{|\vec{U}|^{2}}{|x|}\mathrm{dx}\right)  ^{2}. \label{ineq2-josh}%
	\end{equation}
	
\end{corollary}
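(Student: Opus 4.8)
The plan is to reduce Corollary~\ref{cor3} to Theorem~\ref{th3} by exactly the device that turns Theorem~\ref{th1} into Corollary~\ref{cor1}. First I would invoke the Poincar\'e lemma: since $\mathbb{R}^N$ is contractible and $\vec{U}$ is a smooth closed $1$-form, it is exact, so there is a scalar potential $u$ with $\vec{U}=\nabla u$. Because $\vec{U}$ is compactly supported, $\nabla u$ vanishes outside some ball $B_R$; as $N\geq5\geq2$ the exterior $\{|x|>R\}$ is connected, so $u$ equals a constant $c$ there, and replacing $u$ by $u-c$ we may take $u\in C_c^\infty(\mathbb{R}^N)\subset W^{2,2}(\mathbb{R}^N)$ without changing $\nabla u=\vec{U}$. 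This is the only step where the hypotheses on $\vec{U}$ actually enter.

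Next I would record the identities translating the three functionals in \eqref{ineq2-josh} into those of \eqref{main4}. Pointwise $|\vec{U}|^2=|\nabla u|^2$, which converts $\int_{\mathbb{R}^N}|\vec{U}|^2\,\mathrm{dx}$ and $\int_{\mathbb{R}^N}|\vec{U}|^2/|x|\,\mathrm{dx}$ into $\int_{\mathbb{R}^N}|\nabla u|^2\,\mathrm{dx}$ and $\int_{\mathbb{R}^N}|\nabla u|^2/|x|\,\mathrm{dx}$, respectively. For the Dirichlet energy I would reuse the integration-by-parts identity already established in the proof of Corollary~\ref{cor1}, namely
\[
\int_{\mathbb{R}^N}|\nabla\vec{U}|^2\,\mathrm{dx}=\sum_{i=1}^N\int_{\mathbb{R}^N}|\nabla u_{x_i}|^2\,\mathrm{dx}=\sum_{i,j=1}^N\int_{\mathbb{R}^N}|u_{x_ix_j}|^2\,\mathrm{dx}=\int_{\mathbb{R}^N}|\Delta u|^2\,\mathrm{dx},
\]
the last equality resulting from two integrations by parts that recast $\sum_{i,j}\int u_{x_ix_j}\overline{u_{x_ix_j}}$ as $\sum_{i,j}\int u_{x_ix_i}\overline{u_{x_jx_j}}=\int_{\mathbb{R}^N}|\Delta u|^2\,\mathrm{dx}$, with boundary terms vanishing by compact support.

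Substituting these identities into \eqref{ineq2-josh} turns it verbatim into \eqref{main4}, which Theorem~\ref{th3} guarantees for every $u\in W^{2,2}(\mathbb{R}^N)$ when $N\geq5$; hence the corollary follows at once. I do not anticipate any genuine obstacle, since all the analytic substance---the sharp constant $(N+1)^2/4$ and the validity in $W^{2,2}$---is already carried by Theorem~\ref{th3}, and the corollary is purely a reformulation. The only point deserving care is the passage from a closed, compactly supported $1$-form to a compactly supported scalar potential, where contractibility of $\mathbb{R}^N$ and connectedness of the exterior of a ball (hence the restriction $N\geq2$, comfortably met here) are invoked.
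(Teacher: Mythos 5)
Your proposal is correct and follows exactly the route the paper intends: the paper proves Corollary \ref{cor3} by the same reduction it spells out for Corollary \ref{cor1} (Poincar\'e lemma to write $\vec{U}=\nabla u$, the integration-by-parts identity $\int_{\mathbb{R}^N}|\nabla\vec{U}|^2\,\mathrm{dx}=\int_{\mathbb{R}^N}|\Delta u|^2\,\mathrm{dx}$, then Theorem \ref{th3}). Your extra remark that $u$ must be normalized by a constant, using connectedness of the exterior of a ball, to land in $C_c^\infty(\mathbb{R}^N)$ is a small but welcome refinement of a point the paper leaves implicit.
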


\section{Some useful computations}

\label{sec_useful}

Let $N\geq2$. We implement the well-known idea of decomposing $u$ into
spherical harmonics as follows. We apply the coordinates transformation
$x\in\mathbb{R}^{N}\mapsto(r,\sigma)\in(0,\infty)\times\mathbb{S}^{N-1}$
(where $\mathbb{S}^{N-1}$ is the $\left(  N-1\right)  $-dimensional sphere
with respect to the Hausdorff measure in $\mathbb{R}^{N}$) and, for any function
$u\in C_{c}^{\infty}(\mathbb{R}^{N})$, we may expand $u$ in terms of spherical harmonics
\begin{equation}
  u(x)=u(r\sigma)=\sum_{k=0}^{\infty}\sum_{\ell=1}^{\dim\mathcal{H}_{k}}u_{k\ell}(r)\phi_{k\ell}(\sigma),\label{harmonics}%
\end{equation}
where $\{\phi_{k\ell}\}$, $k\geq0$, $\ell=1,\ldots,\mathcal{H}_{k}$, is an orthonormal basis in $L^{2}(\mathbb{S}%
^{N-1})$ constituted by spherical harmonic functions $\phi_{k\ell}$ of degree $k$, and $\mathcal{H}_{k}$ is the subspace of spherical harmonics of degree $k$.
Such $\phi_{k\ell}$ are smooth eigenfunctions of the Laplace-Beltrami operator
$-\Delta_{\mathbb{S}^{N-1}}$ with the corresponding eigenvalues $c_{k}%
=k(k+N-2)$, $k\geq0$, i.e.
\begin{equation}
  \begin{aligned}
-\Delta_{\mathbb{S}^{N-1}}\phi_{k\ell}=c_{k}\phi_{k\ell},\quad\left\langle \phi
_{k\ell},\phi_{m\ell'}\right\rangle _{L^{2}(\mathbb{S}^{N-1})}=\delta_{km}\delta_{\ell\ell'},&\quad\forall
k,m\in\mathbb{N},\\
&\ell=1,\ldots,\mathcal{H}_{k},\ell'=1,\ldots,\mathcal{H}_{m}
  \end{aligned}\label{eigen}%
\end{equation}
where $\left\langle \cdot,\cdot\right\rangle _{L^{2}(\mathbb{S}^{N-1})}$
denotes the scalar product in $L^{2}(\mathbb{S}^{N-1})$ and $\delta_{kl}$ is
the Kronecker symbol. 
In fact, without altering the following proofs and for sake of simplicity, we choose to use a single index and agree to write
\begin{equation}
  u(x)=u(r\sigma)=\sum_{k=0}^{\infty}u_{k}(r)\phi_{k}(\sigma),
\end{equation}
where $-\Delta_{\mathbb{S}^{N-1}}\phi_{k}=c_{k}\phi_{k}$.
See for example \cite{BEL,TZ,VZ}.

The Fourier coefficients $\{u_{k}\}_{k}$ belong to
$C_{c}^{\infty}([0,\infty))$ and satisfy $u_{k}(r)=O(r^{k})$, $u_{k}^{\prime
}(r)=O(r^{k-1})$ as $r\rightarrow0$. Also, the following formulas hold:
\begin{equation}
\Delta u=\mathcal{R}_{2}u+\frac{1}{r^{2}}\Delta_{\mathbb{S}^{N-1}}%
u,\quad|\nabla u|^{2}=|\partial_{r}u|^{2}+\frac{|\nabla_{\mathbb{S}^{N-1}%
}u|^{2}}{r^{2}}\label{Laplacian}%
\end{equation}
where, $\mathcal{R}_{2}u:=\partial_{rr}^{2}u+\frac{N-1}{r}\partial_{r}u$ is in
fact the radial Laplacian introduced before, $\partial_{r}$ and $\partial
_{rr}^{2}$ are both partial derivatives of first and second order with respect
to the radial component $r$, whereas $\Delta_{\mathbb{S}^{N-1}}$ and
$\nabla_{\mathbb{S}^{N-1}}$ represent the Laplace-Beltrami operator and
respectively the gradient operator with respect to the metric tensor on
$\mathbb{S}^{N-1}$ (for more details see, e.g. \cite{C, NLN}). We also note
that the behavior of $u_{k}(r)$ as $r\rightarrow0$ allows us to consider the
change of variables $u_{k}(r):=r^{k}v_{k}(r)$, where $v_{k}\in C_{c}^{\infty
}[0,\infty)$. 

Applying the co-area formula, properties above and \eqref{Laplacian}, we can
express both parts in \eqref{main1} in terms of the coefficients
$\{u_{k}\}_{k}$ and $\{v_{k}\}_{k}$. More precisely, we obtain the following identities:

\begin{lemma}
\label{l3.1}For $u\in C_{c}^{\infty}(\mathbb{R}^{N}):$%
\begin{align}
\int_{\mathbb{R}^{N}}|\nabla u|^{2}\mathrm{dx}  &  =\sum_{k=0}^{\infty}\left(
\int_{0}^{\infty}r^{N-1}|u_{k}^{\prime}\left(  r\right)  |^{2}dr+c_{k}\int
_{0}^{\infty}r^{N-3}\left\vert u_{k}\left(  r\right)  \right\vert
^{2}dr\right) \label{form1}\\
&  =\sum_{k=0}^{\infty}\int_{0}^{\infty}r^{N+2k-1}|v_{k}^{\prime}\left(
r\right)  |^{2}dr, \label{form1.1}%
\end{align}%
\begin{align}
\int_{\mathbb{R}^{N}}|x|^{2}|\nabla u|^{2}\mathrm{dx}  &  =\sum_{k=0}^{\infty
}\left(  \int_{0}^{\infty}r^{N+1}|u_{k}^{\prime}\left(  r\right)
|^{2}dr+c_{k}\int_{0}^{\infty}r^{N-1}\left\vert u_{k}\left(  r\right)
\right\vert ^{2}dr\right) \label{form2}\\
&  =\sum_{k=0}^{\infty}\left(  \int_{0}^{\infty}r^{N+2k+1}|v_{k}^{\prime
}\left(  r\right)  |^{2}dr-2k\int_{0}^{\infty}r^{N+2k-1}\left\vert
v_{k}\left(  r\right)  \right\vert ^{2}dr\right)  , \label{form2.1}%
\end{align}%
\begin{align}
\int_{\mathbb{R}^{N}}\frac{|\nabla u|^{2}}{\left\vert x\right\vert
}\mathrm{dx}  &  =\sum_{k=0}^{\infty}\left(  \int_{0}^{\infty}\left\vert
u_{k}^{\prime}\right\vert ^{2}r^{N-2}dr+c_{k}\int_{0}^{\infty}\left\vert
u_{k}\right\vert ^{2}r^{N-4}dr\right) \\
&  =\sum_{k=0}^{\infty}\left(  \int_{0}^{\infty}r^{N+2k-2}|v_{k}^{\prime}%
|^{2}dr+k\int_{0}^{\infty}r^{N+2k-4}\left\vert v_{k}\right\vert ^{2}dr\right)
,
\end{align}
and%
\begin{align}
\int_{\mathbb{R}^{N}}|\Delta u|^{2}\mathrm{dx}&=\sum_{k=0}^{\infty}\left(
\int_{0}^{\infty}r^{N+2k-1}|v_{k}^{\prime\prime}\left(  r\right)
|^{2}dr+(N+2k-1)\int_{0}^{\infty}r^{N+2k-3}|v_{k}^{\prime}\left(  r\right)
|^{2}dr\right)  . \label{form3}\\
&=\sum_{k=0}^{\infty} \Bigg(\int
_{0}^{\infty}r^{N-1}|u_{k}^{\prime\prime}(r)|^{2} dr + (N-1+2c_{k}) \int
_{0}^{\infty} r^{N-3} |u_{k}^{\prime}(r)|^{2} dr\nonumber\\
&+\left( c_{k}^{2} + 2c_{k} (N-4)\right) \int_{0}^{\infty}r^{N-5} |u_{k}(r)|^{2} dr\Bigg) \label{imp}
\end{align}

\end{lemma}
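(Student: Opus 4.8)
The plan is to run the spherical harmonics decomposition \eqref{harmonics} together with the coarea formula, reducing every integral over $\mathbb{R}^{N}$ to a sum of one–dimensional radial integrals against the weight $r^{N-1}$. Two orthogonality facts do the angular bookkeeping: the orthonormality relation in \eqref{eigen}, giving $\int_{\mathbb{S}^{N-1}}\phi_{k}\overline{\phi_{m}}\,d\sigma=\delta_{km}$, and its gradient counterpart $\int_{\mathbb{S}^{N-1}}\nabla_{\mathbb{S}^{N-1}}\phi_{k}\cdot\overline{\nabla_{\mathbb{S}^{N-1}}\phi_{m}}\,d\sigma=c_{k}\delta_{km}$, the latter obtained by integrating by parts on the (boundaryless) sphere and invoking $-\Delta_{\mathbb{S}^{N-1}}\phi_{k}=c_{k}\phi_{k}$.

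First I would treat the first–order quantities. Using $|\nabla u|^{2}=|\partial_{r}u|^{2}+r^{-2}|\nabla_{\mathbb{S}^{N-1}}u|^{2}$ from \eqref{Laplacian}, and writing $\partial_{r}u=\sum_{k}u_{k}'\phi_{k}$, $\nabla_{\mathbb{S}^{N-1}}u=\sum_{k}u_{k}\nabla_{\mathbb{S}^{N-1}}\phi_{k}$, the two orthogonality relations collapse the angular integration to $\int_{\mathbb{S}^{N-1}}|\nabla u|^{2}\,d\sigma=\sum_{k}\big(|u_{k}'|^{2}+c_{k}r^{-2}|u_{k}|^{2}\big)$. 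Multiplying by $r^{N-1}$ and integrating yields \eqref{form1}; inserting the weights $|x|^{2}=r^{2}$ and $|x|^{-1}=r^{-1}$ gives \eqref{form2} and the $|\nabla u|^{2}/|x|$ identity verbatim. To pass to the $v_{k}$–forms I would substitute $u_{k}=r^{k}v_{k}$, expand $u_{k}'=r^{k}v_{k}'+k r^{k-1}v_{k}$, and integrate the resulting cross term $2k\,\operatorname{Re}(v_{k}'\overline{v_{k}})$ by parts; the decay $u_{k}(r)=O(r^{k})$, $u_{k}'(r)=O(r^{k-1})$ at the origin together with compact support at infinity kill all boundary terms. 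The identity $c_{k}=k(k+N-2)$ then makes the $|v_{k}|^{2}$–terms cancel completely in \eqref{form1.1}, while in \eqref{form2.1} and its $|x|^{-1}$ analogue a residual $|v_{k}|^{2}$–term (with coefficients $-2k$ and $+k$, respectively) survives.

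For the Laplacian I would use $\Delta u=\mathcal{R}_{2}u+r^{-2}\Delta_{\mathbb{S}^{N-1}}u$ to write $\Delta u=\sum_{k}\big(u_{k}''+\tfrac{N-1}{r}u_{k}'-\tfrac{c_{k}}{r^{2}}u_{k}\big)\phi_{k}$, so that orthonormality gives
\[
\int_{\mathbb{R}^{N}}|\Delta u|^{2}\,\mathrm{dx}=\sum_{k=0}^{\infty}\int_{0}^{\infty}r^{N-1}\Big|u_{k}''+\tfrac{N-1}{r}u_{k}'-\tfrac{c_{k}}{r^{2}}u_{k}\Big|^{2}\,dr.
\]
Expanding the square and integrating the cross terms by parts produces the three radial integrals in \eqref{imp}; substituting $u_{k}=r^{k}v_{k}$ and integrating by parts once more collapses these into the two–term expression \eqref{form3}.

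The main obstacle is precisely this $|\Delta u|^{2}$ computation. Expanding $|u_{k}''+\tfrac{N-1}{r}u_{k}'-\tfrac{c_{k}}{r^{2}}u_{k}|^{2}$ generates mixed terms such as $\operatorname{Re}(u_{k}''\overline{u_{k}'})$, $\operatorname{Re}(u_{k}''\overline{u_{k}})$ and $\operatorname{Re}(u_{k}'\overline{u_{k}})$ carrying negative powers of $r$ down to $r^{N-5}$, so the integrations by parts must be organized carefully and, crucially, one must verify that \emph{every} boundary term vanishes at the origin. This forces one to exploit the sharp vanishing rates $u_{k}(r)=O(r^{k})$, $u_{k}'(r)=O(r^{k-1})$ (equivalently, the smoothness of $v_{k}$ up to $r=0$) so that the singular weights produce no spurious contributions; tracking the precise coefficients $N-1+2c_{k}$ and $c_{k}^{2}+2c_{k}(N-4)$ through these cancellations is the delicate bookkeeping at the heart of the lemma.
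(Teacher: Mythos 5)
Your plan is essentially the paper's own proof: decompose into spherical harmonics, use the co-area formula and the orthogonality relations to reduce each integral to a sum of one-dimensional radial integrals, substitute $u_{k}=r^{k}v_{k}$, and integrate the cross terms by parts, with the cancellations driven by $c_{k}=k(k+N-2)$. The only cosmetic differences are that the paper obtains the first-order identities by first writing, e.g., $\int|\nabla u|^{2}=-\operatorname{Re}\int\overline{u}\,\Delta u$ and expanding $\Delta u$ (rather than expanding $|\nabla u|^{2}$ pointwise as you do, which is equally valid), and that for $\int|\Delta u|^{2}$ the paper substitutes $u_{k}=r^{k}v_{k}$ \emph{before} squaring, so that the radial expression collapses to the two-term $r^{k}v_{k}''+(N+2k-1)r^{k-1}v_{k}'$ and only one cross term needs integrating by parts, whereas you propose to first expand the three-term square to get \eqref{imp} and then substitute; both orders work.

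One point where your justification is thinner than it needs to be: in the passage to \eqref{form3} the integration by parts
\[
\int_{0}^{\infty}r^{N+2k-2}\bigl(|v_{k}'(r)|^{2}\bigr)'\,dr=-(N+2k-2)\int_{0}^{\infty}r^{N+2k-3}|v_{k}'(r)|^{2}\,dr
\]
has, when $N=2$ and $k=0$, the weight $r^{N+2k-2}\equiv 1$, so the boundary term at the origin is $|v_{0}'(0)|^{2}$ and does \emph{not} vanish merely because of the rates $u_{k}=O(r^{k})$, $u_{k}'=O(r^{k-1})$ or the smoothness of $v_{k}$ up to $r=0$; one needs the additional fact $v_{0}'(0)=0$. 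The paper devotes a separate argument to this (deducing $v_{0}'(0)=0$ from the finiteness of $\int_{0}^{\infty}r^{-1}|v_{0}'|^{2}\,dr$ forced by the other terms); alternatively one can invoke the evenness of the spherical average $u_{0}$ of a smooth function. You should make this one case explicit rather than subsuming it under "all boundary terms vanish."
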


\begin{proof}
In view of \eqref{eigen} and \eqref{Laplacian}, we have
\begin{align*}
&  \int_{\mathbb{R}^{N}}|\nabla u|^{2}\mathrm{dx}=-\operatorname{Re}%
\int_{\mathbb{R}^{N}}\overline{u}\Delta u\mathrm{dx}\\
&  =-\operatorname{Re}\int_{0}^{\infty}\int_{\mathbb{S}^{N-1}}\left[
{\displaystyle\sum\limits_{k=0}^{\infty}}
\overline{u_{k}\left(  r\right)  }\overline{\phi_{k}\left(  \sigma\right)
}\right]  \left[
{\displaystyle\sum\limits_{k=0}^{\infty}}
\left(  u_{k}^{\prime\prime}\left(  r\right)  +\frac{N-1}{r}u_{k}^{\prime
}\left(  r\right)  -c_{k}\frac{u_{k}\left(  r\right)  }{r^{2}}\right)
\phi_{k}\left(  \sigma\right)  \right]  r^{N-1}drd\sigma\\
&  =-%
{\displaystyle\sum\limits_{k=0}^{\infty}}
\operatorname{Re}\int_{0}^{\infty}\overline{u_{k}\left(  r\right)  }\left(
u_{k}^{\prime\prime}\left(  r\right)  +\frac{N-1}{r}u_{k}^{\prime}\left(
r\right)  -c_{k}\frac{u_{k}\left(  r\right)  }{r^{2}}\right)  r^{N-1}dr\\
&  =-%
{\displaystyle\sum\limits_{k=0}^{\infty}}
\operatorname{Re}\left[  \int_{0}^{\infty}\overline{u_{k}\left(  r\right)
}u_{k}^{\prime\prime}\left(  r\right)  r^{N-1}dr+\left(  N-1\right)  \int
_{0}^{\infty}\overline{u_{k}\left(  r\right)  }u_{k}^{\prime}\left(  r\right)
r^{N-2}dr-c_{k}\int_{0}^{\infty}\left\vert u_{k}\left(  r\right)  \right\vert
^{2}r^{N-3}dr\right]  .
\end{align*}
Note that
\begin{align*}
-\operatorname{Re}\int_{0}^{\infty}\overline{u_{k}\left(  r\right)  }%
u_{k}^{\prime\prime}\left(  r\right)  r^{N-1}dr  &  =\operatorname{Re}\int
_{0}^{\infty}u_{k}^{\prime}\left(  r\right)  \left(  \overline{u_{k}\left(
r\right)  }r^{N-1}\right)  ^{\prime}dr\\
&  =\int_{0}^{\infty}\left\vert u_{k}^{\prime}\right\vert ^{2}r^{N-1}%
dr+\left(  N-1\right)  \operatorname{Re}\int_{0}^{\infty}u_{k}^{\prime}\left(
r\right)  \overline{u_{k}\left(  r\right)  }r^{N-2}dr.
\end{align*}
Therefore
\begin{align*}
&  \int_{\mathbb{R}^{N}}|\nabla u|^{2}\mathrm{dx}\\
&  =%
{\displaystyle\sum\limits_{k=0}^{\infty}}
\left(  \int_{0}^{\infty}\left\vert u_{k}^{\prime}\right\vert ^{2}%
r^{N-1}dr+c_{k}\int_{0}^{\infty}\left\vert u_{k}\right\vert ^{2}%
r^{N-3}dr\right) \\
&  =\sum_{k=0}^{\infty}\left(  \int_{0}^{\infty}r^{N-1}|(r^{k}v_{k})^{\prime
}|^{2}dr+c_{k}\int_{0}^{\infty}r^{N-3}\left\vert r^{k}v_{k}\right\vert
^{2}dr\right) \\
&  =\sum_{k=0}^{\infty}\left(  \int_{0}^{\infty}r^{N-1}|kr^{k-1}v_{k}%
+r^{k}v_{k}^{\prime}|^{2}dr+c_{k}\int_{0}^{\infty}r^{N+2k-3}\left\vert
v_{k}\right\vert ^{2}dr\right) \\
&  =\sum_{k=0}^{\infty}\left(  \int_{0}^{\infty}\left(  r^{N+2k-1}%
|v_{k}^{\prime}|^{2}+k^{2}r^{N+2k-3}\left\vert v_{k}\right\vert ^{2}%
+kr^{N+2k-2}(\left\vert v_{k}\right\vert ^{2})^{\prime}\right)  dr+c_{k}%
\int_{0}^{\infty}r^{N+2k-3}\left\vert v_{k}\right\vert ^{2}dr\right) \\
&  =\sum_{k=0}^{\infty}\int_{0}^{\infty}r^{N+2k-1}|v_{k}^{\prime}|^{2}dr.
\end{align*}
Here the last identity follows by integrating by parts the terms $k\int
_{0}^{\infty}r^{N+2k-2}(\left\vert v_{k}\right\vert ^{2})^{\prime}dr:$%
\[
k\int_{0}^{\infty}r^{N+2k-2}(\left\vert v_{k}\right\vert ^{2})^{\prime
}dr=-k\left(  N+2k-2\right)  \int_{0}^{\infty}\left\vert v_{k}\right\vert
^{2}r^{N+2k-3}dr.
\]
Similarly, we write successively
\begin{align*}
\int_{\mathbb{R}^{N}}\left\vert x\right\vert ^{2}|\nabla u|^{2}\mathrm{dx}  &
=-\operatorname{Re}\int_{\mathbb{R}^{N}}\overline{u}\operatorname{div}\left(
\left\vert x\right\vert ^{2}\nabla u\right)  \mathrm{dx}\\
&  =-\operatorname{Re}\int_{\mathbb{R}^{N}}\left\vert x\right\vert
^{2}\overline{u}\Delta u\mathrm{dx}-2\operatorname{Re}\int_{\mathbb{R}^{N}%
}\overline{u}x\cdot\nabla u\mathrm{dx}\\
&  =-\operatorname{Re}\int_{\mathbb{R}^{N}}\left\vert x\right\vert
^{2}\overline{u}\Delta u\mathrm{dx}-\int_{\mathbb{R}^{N}}x\cdot\nabla
(\left\vert u\right\vert ^{2})\mathrm{dx}\\
&  =-\operatorname{Re}\int_{\mathbb{R}^{N}}\left\vert x\right\vert
^{2}\overline{u}\Delta u\mathrm{dx}+N\int_{\mathbb{R}^{N}}\left\vert
u\right\vert ^{2}\mathrm{dx}.
\end{align*}
Note%
\begin{align*}
&  \operatorname{Re}\int_{\mathbb{R}^{N}}\left\vert x\right\vert ^{2}%
\overline{u}\Delta u\mathrm{dx}\\
&  =\operatorname{Re}\int_{0}^{\infty}\int_{\mathbb{S}^{N-1}}\left[
{\displaystyle\sum\limits_{k=0}^{\infty}}
\overline{u_{k}\left(  r\right)  }\overline{\phi_{k}\left(  \sigma\right)
}\right]  \left[
{\displaystyle\sum\limits_{k=0}^{\infty}}
\left(  u_{k}^{\prime\prime}\left(  r\right)  +\frac{N-1}{r}u_{k}^{\prime
}\left(  r\right)  -c_{k}\frac{u_{k}\left(  r\right)  }{r^{2}}\right)
\phi_{k}\left(  \sigma\right)  \right]  r^{N+1}drd\sigma\\
&  =%
{\displaystyle\sum\limits_{k=0}^{\infty}}
\operatorname{Re}\int_{0}^{\infty}\overline{u_{k}\left(  r\right)  }\left(
u_{k}^{\prime\prime}\left(  r\right)  +\frac{N-1}{r}u_{k}^{\prime}\left(
r\right)  -c_{k}\frac{u_{k}\left(  r\right)  }{r^{2}}\right)  r^{N+1}dr\\
&  =%
{\displaystyle\sum\limits_{k=0}^{\infty}}
\left[  \operatorname{Re}\int_{0}^{\infty}\overline{u_{k}\left(  r\right)
}u_{k}^{\prime\prime}\left(  r\right)  r^{N+1}dr+\left(  N-1\right)
\operatorname{Re}\int_{0}^{\infty}\overline{u_{k}\left(  r\right)  }%
u_{k}^{\prime}\left(  r\right)  r^{N}dr-c_{k}\int_{0}^{\infty}\left\vert
u_{k}\right\vert ^{2}r^{N-1}dr\right] \\
&  =%
{\displaystyle\sum\limits_{k=0}^{\infty}}
\left[  -\int_{0}^{\infty}\left\vert u_{k}^{\prime}\right\vert ^{2}%
r^{N+1}dr-2\operatorname{Re}\int_{0}^{\infty}u_{k}^{\prime}\left(  r\right)
\overline{u_{k}\left(  r\right)  }r^{N}dr-c_{k}\int_{0}^{\infty}\left\vert
u_{k}\right\vert ^{2}r^{N-1}dr\right] \\
&  ={\sum\limits_{k=0}^{\infty}}\left[  -\int_{0}^{\infty}\left\vert
u_{k}^{\prime}\right\vert ^{2}r^{N+1}dr+N\int_{0}^{\infty}|u_{k}%
(r)|^{2}r^{N-1}dr-c_{k}\int_{0}^{\infty}\left\vert u_{k}\right\vert
^{2}r^{N-1}dr\right]
\end{align*}
and%
\begin{align*}
\int_{\mathbb{R}^{N}}\left\vert u\right\vert ^{2}\mathrm{dx}  &  =\int
_{0}^{\infty}\int_{\mathbb{S}^{N-1}}\left\vert
{\displaystyle\sum\limits_{k=0}^{\infty}}
u_{k}\left(  r\right)  \phi_{k}\left(  \sigma\right)  \right\vert ^{2}%
r^{N-1}d\sigma dr\\
&  =\sum_{k=0}^{\infty}\int_{0}^{\infty}|u_{k}|^{2}r^{N-1}dr.
\end{align*}
Therefore
\begin{align*}
&  \int_{\mathbb{R}^{N}}\left\vert x\right\vert ^{2}|\nabla u|^{2}%
\mathrm{dx}\\
&  =%
{\displaystyle\sum\limits_{k=0}^{\infty}}
\left(  \int_{0}^{\infty}\left\vert u_{k}^{\prime}\right\vert ^{2}%
r^{N+1}dr+c_{k}\int_{0}^{\infty}\left\vert u_{k}\right\vert ^{2}%
r^{N-1}dr\right) \\
&  =\sum_{k=0}^{\infty}\left(  \int_{0}^{\infty}r^{N+1}|(r^{k}v_{k})^{\prime
}|^{2}dr+c_{k}\int_{0}^{\infty}r^{N-1}\left\vert r^{k}v_{k}\right\vert
^{2}dr\right) \\
&  =\sum_{k=0}^{\infty}\left(  \int_{0}^{\infty}r^{N+1}|kr^{k-1}v_{k}%
+r^{k}v_{k}^{\prime}|^{2}dr+c_{k}\int_{0}^{\infty}r^{N+2k-1}\left\vert
v_{k}\right\vert ^{2}dr\right) \\
&  =\sum_{k=0}^{\infty}\left(  \int_{0}^{\infty}\left(  r^{N+2k+1}%
|v_{k}^{\prime}|^{2}+k^{2}r^{N+2k-1}\left\vert v_{k}\right\vert ^{2}%
+kr^{N+2k}(\left\vert v_{k}\right\vert ^{2})^{\prime}\right)  dr+c_{k}\int
_{0}^{\infty}r^{N+2k-1}\left\vert v_{k}\right\vert ^{2}dr\right) \\
&  =\sum_{k=0}^{\infty}\left(  \int_{0}^{\infty}r^{N+2k+1}|v_{k}^{\prime}%
|^{2}dr-2k\int_{0}^{\infty}r^{N+2k-1}\left\vert v_{k}\right\vert
^{2}dr\right)  .
\end{align*}
Here the last identity follows by integrating by parts the terms $k\int
_{0}^{\infty}r^{N+2k}(\left\vert v_{k}\right\vert ^{2})^{\prime}dr:$%
\[
k\int_{0}^{\infty}r^{N+2k}(\left\vert v_{k}\right\vert ^{2})^{\prime
}dr=-k\left(  N+2k\right)  \int_{0}^{\infty}\left\vert v_{k}\right\vert
^{2}r^{N+2k-3}dr.
\]
We also have that%
\begin{align*}
\int_{\mathbb{R}^{N}}\frac{|\nabla u|^{2}}{\left\vert x\right\vert
}\mathrm{dx}  &  =-\operatorname{Re}\int_{\mathbb{R}^{N}}\overline
{u}\operatorname{div}\left(  \frac{\nabla u}{\left\vert x\right\vert }\right)
\mathrm{dx}\\
&  =-\operatorname{Re}\int_{\mathbb{R}^{N}}\frac{1}{\left\vert x\right\vert
}\overline{u}\Delta u\mathrm{dx}+\operatorname{Re}\int_{\mathbb{R}^{N}%
}\overline{u}\frac{x}{\left\vert x\right\vert ^{3}}\cdot\nabla u\mathrm{dx}\\
&  =-\operatorname{Re}\int_{\mathbb{R}^{N}}\frac{1}{\left\vert x\right\vert
}\overline{u}\Delta u\mathrm{dx}+\frac{1}{2}\int_{\mathbb{R}^{N}}\frac
{x}{\left\vert x\right\vert ^{3}}\cdot\nabla(\left\vert u\right\vert
^{2})\mathrm{dx}\\
&  =-\operatorname{Re}\int_{\mathbb{R}^{N}}\frac{1}{\left\vert x\right\vert
}\overline{u}\Delta u\mathrm{dx}-\frac{N-3}{2}\int_{\mathbb{R}^{N}}%
\frac{\left\vert u\right\vert ^{2}}{\left\vert x\right\vert ^{3}}\mathrm{dx}.
\end{align*}
Note%
\begin{align*}
&  \operatorname{Re}\int_{\mathbb{R}^{N}}\frac{1}{\left\vert x\right\vert
}\overline{u}\Delta u\mathrm{dx}\\
&  =\operatorname{Re}\int_{0}^{\infty}\int_{\mathbb{S}^{N-1}}\left[
{\displaystyle\sum\limits_{k=0}^{\infty}}
\overline{u_{k}\left(  r\right)  }\overline{\phi_{k}\left(  \sigma\right)
}\right]  \left[
{\displaystyle\sum\limits_{k=0}^{\infty}}
\left(  u_{k}^{\prime\prime}\left(  r\right)  +\frac{N-1}{r}u_{k}^{\prime
}\left(  r\right)  -c_{k}\frac{u_{k}\left(  r\right)  }{r^{2}}\right)
\phi_{k}\left(  \sigma\right)  \right]  r^{N-2}drd\sigma\\
&  =%
{\displaystyle\sum\limits_{k=0}^{\infty}}
\operatorname{Re}\int_{0}^{\infty}\overline{u_{k}\left(  r\right)  }\left(
u_{k}^{\prime\prime}\left(  r\right)  +\frac{N-1}{r}u_{k}^{\prime}\left(
r\right)  -c_{k}\frac{u_{k}\left(  r\right)  }{r^{2}}\right)  r^{N-2}dr\\
&  =%
{\displaystyle\sum\limits_{k=0}^{\infty}}
\left[  \operatorname{Re}\int_{0}^{\infty}\overline{u_{k}\left(  r\right)
}u_{k}^{\prime\prime}\left(  r\right)  r^{N-2}dr+\left(  N-1\right)
\operatorname{Re}\int_{0}^{\infty}\overline{u_{k}\left(  r\right)  }%
u_{k}^{\prime}\left(  r\right)  r^{N-3}dr-c_{k}\int_{0}^{\infty}\left\vert
u_{k}\right\vert ^{2}r^{N-4}dr\right] \\
&  =%
{\displaystyle\sum\limits_{k=0}^{\infty}}
\left[  -\int_{0}^{\infty}\left\vert u_{k}^{\prime}\right\vert ^{2}%
r^{N-2}dr+\operatorname{Re}\int_{0}^{\infty}u_{k}^{\prime}\left(  r\right)
\overline{u_{k}\left(  r\right)  }r^{N-3}dr-c_{k}\int_{0}^{\infty}\left\vert
u_{k}\right\vert ^{2}r^{N-4}dr\right] \\
&  ={\sum\limits_{k=0}^{\infty}}\left[  -\int_{0}^{\infty}\left\vert
u_{k}^{\prime}\right\vert ^{2}r^{N-2}dr-\frac{N-3}{2}\int_{0}^{\infty}%
|u_{k}(r)|^{2}r^{N-4}dr-c_{k}\int_{0}^{\infty}\left\vert u_{k}\right\vert
^{2}r^{N-4}dr\right]
\end{align*}
and%
\begin{align*}
\int_{\mathbb{R}^{N}}\frac{\left\vert u\right\vert ^{2}}{\left\vert
x\right\vert ^{3}}\mathrm{dx}  &  =\int_{0}^{\infty}\int_{\mathbb{S}^{N-1}%
}\left\vert
{\displaystyle\sum\limits_{k=0}^{\infty}}
u_{k}\left(  r\right)  \phi_{k}\left(  \sigma\right)  \right\vert ^{2}%
r^{N-4}d\sigma dr\\
&  =\sum_{k=0}^{\infty}\int_{0}^{\infty}|u_{k}|^{2}r^{N-4}dr.
\end{align*}
Therefore%
\begin{align*}
&  \int_{\mathbb{R}^{N}}\frac{|\nabla u|^{2}}{\left\vert x\right\vert
}\mathrm{dx}\\
&  =%
{\displaystyle\sum\limits_{k=0}^{\infty}}
\left(  \int_{0}^{\infty}\left\vert u_{k}^{\prime}\right\vert ^{2}%
r^{N-2}dr+c_{k}\int_{0}^{\infty}\left\vert u_{k}\right\vert ^{2}%
r^{N-4}dr\right) \\
&  =\sum_{k=0}^{\infty}\left(  \int_{0}^{\infty}r^{N-2}|(r^{k}v_{k})^{\prime
}|^{2}dr+c_{k}\int_{0}^{\infty}r^{N-4}\left\vert r^{k}v_{k}\right\vert
^{2}dr\right) \\
&  =\sum_{k=0}^{\infty}\left(  \int_{0}^{\infty}r^{N-2}|kr^{k-1}v_{k}%
+r^{k}v_{k}^{\prime}|^{2}dr+c_{k}\int_{0}^{\infty}r^{N+2k-4}\left\vert
v_{k}\right\vert ^{2}dr\right) \\
&  =\sum_{k=0}^{\infty}\left(  \int_{0}^{\infty}\left(  r^{N+2k-2}%
|v_{k}^{\prime}|^{2}+k^{2}r^{N+2k-4}\left\vert v_{k}\right\vert ^{2}%
+kr^{N+2k-3}(\left\vert v_{k}\right\vert ^{2})^{\prime}\right)  dr+c_{k}%
\int_{0}^{\infty}r^{N+2k-4}\left\vert v_{k}\right\vert ^{2}dr\right) \\
&  =\sum_{k=0}^{\infty}\left(  \int_{0}^{\infty}r^{N+2k-2}|v_{k}^{\prime}%
|^{2}dr+k\int_{0}^{\infty}r^{N+2k-4}\left\vert v_{k}\right\vert ^{2}dr\right)
.
\end{align*}

Finally, we get%
\begin{align*}
&  \int_{\mathbb{R}^{N}}\left\vert \Delta u\right\vert ^{2}\mathrm{dx}\\
&  =\int_{0}^{\infty}\int_{\mathbb{S}^{N-1}}\left\vert
{\displaystyle\sum\limits_{k=0}^{\infty}}
\Delta\left(  u_{k}\left(  r\right)  \phi_{k}\left(  \sigma\right)  \right)
\right\vert ^{2}r^{N-1}d\sigma dr\\
&  =\int_{0}^{\infty}\int_{\mathbb{S}^{N-1}}\left\vert
{\displaystyle\sum\limits_{k=0}^{\infty}}
u_{k}^{\prime\prime}\left(  r\right)  \phi_{k}\left(  \sigma\right)
+\frac{N-1}{r}u_{k}^{\prime}\left(  r\right)  \phi_{k}\left(  \sigma\right)
-c_{k}\frac{u_{k}\left(  r\right)  }{r^{2}}\phi_{k}\left(  \sigma\right)
\right\vert ^{2}r^{N-1}d\sigma dr\\
&  =%
{\displaystyle\sum\limits_{k=0}^{\infty}}
\int_{0}^{\infty}\left\vert u_{k}^{\prime\prime}\left(  r\right)  +\frac
{N-1}{r}u_{k}^{\prime}\left(  r\right)  -c_{k}\frac{u_{k}\left(  r\right)
}{r^{2}}\right\vert ^{2}r^{N-1}dr\\
&  =\sum_{k=0}^{\infty}\int_{0}^{\infty}r^{N-1}\left\vert (r^{k}%
v_{k}(r))^{\prime\prime}+\frac{N-1}{r}(r^{k}v_{k}(r))^{\prime}-c_{k}%
r^{k-2}v_{k}(r)\right\vert ^{2}dr\\
&  =\sum_{k=0}^{\infty}\int_{0}^{\infty}r^{N-1}\left\vert r^{k}v_{k}%
^{\prime\prime}(r)+(N+2k-1)r^{k-1}v_{k}^{\prime}(r)\right\vert ^{2}dr\\
&  =\sum_{k=0}^{\infty}\Big(\int_{0}^{\infty}r^{N+2k-1}|v_{k}^{\prime\prime
}(r)|^{2}dr+(N+2k-1)^{2}\int_{0}^{\infty}r^{N+2k-3}|v_{k}^{\prime}(r)|^{2}dr\\
&  \phantom{XXXXXXXXXXXXxXX}+(N+2k-1)\int_{0}^{\infty}r^{N+2k-2}\left(
\left\vert v_{k}^{\prime}(r)\right\vert ^{2}\right)  ^{\prime}dr\Big)\\
&  =\sum_{k=0}^{\infty}\left(  \int_{0}^{\infty}r^{N+2k-1}|v_{k}^{\prime
\prime}(r)|^{2}dr+(N+2k-1)\int_{0}^{\infty}r^{N+2k-3}|v_{k}^{\prime}%
(r)|^{2}dr\right)  .
\end{align*}
Here the last identity follows from the following identity:%
\[
\int_{0}^{\infty}r^{N+2k-2}\left(  |v_{k}^{\prime}(r)|^{2}\right)  ^{\prime
}dr=-\left(  N+2k-2\right)  \int_{0}^{\infty}r^{N+2k-3}|v_{k}^{\prime}%
(r)|^{2}dr.
\]
We note that in the case $N=2$ and $k=0$, the above identity requires the
condition $v_{0}^{\prime}\left(  0\right)  =0$. However, we can get this
condition from the following argument: since%
\begin{multline*}
\int_{\mathbb{R}^{2}}|\Delta u|^{2}\mathrm{dx}=\sum_{k=0}^{\infty}%
\Big(\int_{0}^{\infty}r^{2+2k-1}|v_{k}^{\prime\prime}(r)|^{2}dr+(2+2k-1)^{2}%
\int_{0}^{\infty}r^{2+2k-3}|v_{k}^{\prime}(r)|^{2}dr\\
+(2+2k-1)\int_{0}^{\infty}r^{2+2k-2}\left(  \left\vert v_{k}^{\prime
}(r)\right\vert ^{2}\right)  ^{\prime}dr\Big),
\end{multline*}
we deduce that $\int_{0}^{\infty}r^{-1}|v_{0}^{\prime}(r)|^{2}dr$ is
well-defined since all the other terms are well-defined. Therefore we get
$v_{0}^{\prime}(0)=0$.

Identity \eqref{imp} can be obtained similarly as \eqref{form3}. We let the details to the reader since \eqref{imp} is an auxiliary result for our purpose, being applied only to argue Remark \ref{r2.2}. 
\end{proof}

For the radial operators $\mathcal{R}_{1}$ and $\mathcal{R}_{2}$, we also have the following analogous identities. 

\begin{lemma}
\label{l3.2}For $u\in C_{c}^{\infty}(\mathbb{R}^{N}):$%
\begin{align*}
\int_{\mathbb{R}^{N}}|\mathcal{R}_{1}u|^{2}\mathrm{dx}  &  =%
{\displaystyle\sum\limits_{k=0}^{\infty}}
\int_{0}^{\infty}\left\vert u_{k}^{\prime}\left(  r\right)  \right\vert
^{2}r^{N-1}dr\\
&  =%
{\displaystyle\sum\limits_{k=0}^{\infty}}
\left(  \int_{0}^{\infty}r^{N+2k-1}\left\vert v_{k}^{\prime}\left(  r\right)
\right\vert ^{2}dr-c_{k}\int_{0}^{\infty}r^{N+2k-3}\left\vert v_{k}\left(
r\right)  \right\vert ^{2}dr\right)  ,
\end{align*}%
\begin{align*}
\int_{\mathbb{R}^{N}}\left\vert x\right\vert ^{2}|\mathcal{R}_{1}%
u|^{2}\mathrm{dx}  &  =%
{\displaystyle\sum\limits_{k=0}^{\infty}}
\int_{0}^{\infty}\left\vert u_{k}^{\prime}\left(  r\right)  \right\vert
^{2}r^{N+1}dr\\
&  =%
{\displaystyle\sum\limits_{k=0}^{\infty}}
\left(  \int_{0}^{\infty}r^{N+2k+1}\left\vert v_{k}^{\prime}\left(  r\right)
\right\vert ^{2}dr-\left(  c_{k}+2k\right)  \int_{0}^{\infty}r^{N+2k-1}%
\left\vert v_{k}\left(  r\right)  \right\vert ^{2}dr\right)  ,
\end{align*}%
\begin{align*}
\int_{\mathbb{R}^{N}}\frac{|\mathcal{R}_{1}u|^{2}}{\left\vert x\right\vert
}\mathrm{dx}  &  =%
{\displaystyle\sum\limits_{k=0}^{\infty}}
\int_{0}^{\infty}\left\vert u_{k}^{\prime}\left(  r\right)  \right\vert
^{2}r^{N-2}dr\\
&  =%
{\displaystyle\sum\limits_{k=0}^{\infty}}
\left(  \int_{0}^{\infty}r^{N+2k-2}\left\vert v_{k}^{\prime}\left(  r\right)
\right\vert ^{2}dr-\left(  c_{k}-k\right)  \int_{0}^{\infty}r^{N+2k-4}%
\left\vert v_{k}\left(  r\right)  \right\vert ^{2}dr\right)  ,
\end{align*}
and%
\[
\int_{\mathbb{R}^{N}}|\mathcal{R}_{2}u|^{2}\mathrm{dx}=%
{\displaystyle\sum\limits_{k=0}^{\infty}}
\left(  \int_{0}^{\infty}\left\vert u_{k}^{\prime\prime}\left(  r\right)
\right\vert ^{2}r^{N-1}dr+\left(  N-1\right)  \int_{0}^{\infty}\left\vert
u_{k}^{\prime}\left(  r\right)  \right\vert ^{2}r^{N-3}dr\right)  .
\]

\end{lemma}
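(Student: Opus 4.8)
The plan is to exploit that $\mathcal{R}_1$ and $\mathcal{R}_2$ act purely on the radial variable. Since $\mathcal{R}_1=\frac{x}{|x|}\cdot\nabla=\partial_r$ and the spherical harmonics $\phi_k$ do not depend on $r$, the expansion $u=\sum_k u_k(r)\phi_k(\sigma)$ gives $\mathcal{R}_1 u=\sum_k u_k'(r)\phi_k(\sigma)$ and $\mathcal{R}_2 u=\sum_k\bigl(u_k''(r)+\tfrac{N-1}{r}u_k'(r)\bigr)\phi_k(\sigma)$. In contrast to the full gradient and full Laplacian treated in Lemma \ref{l3.1}, neither operator carries an angular contribution (no $\nabla_{\mathbb{S}^{N-1}}$ or $\Delta_{\mathbb{S}^{N-1}}$ term); consequently the eigenvalue $c_k=k(k+N-2)$ will enter the $v_k$-forms only through the substitution $u_k=r^k v_k$, and not directly from the operators themselves.

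First I would dispatch the three $\mathcal{R}_1$-identities. By the co-area formula and the $L^2(\mathbb{S}^{N-1})$-orthonormality recorded in \eqref{eigen}, for each weight $|x|^\gamma\in\{1,|x|^2,|x|^{-1}\}$ one obtains at once $\int_{\mathbb{R}^N}|x|^\gamma|\mathcal{R}_1 u|^2\,\mathrm{dx}=\sum_k\int_0^\infty |u_k'(r)|^2\,r^{N-1+\gamma}\,dr$, which is exactly the first equality in each identity upon taking $\gamma=0,2,-1$. The corresponding $v_k$-forms then follow by inserting $u_k=r^k v_k$, expanding $|u_k'|^2=r^{2k}|v_k'|^2+k^2 r^{2k-2}|v_k|^2+k\,r^{2k-1}(|v_k|^2)'$, and integrating the cross term by parts; the coefficient of the $|v_k|^2$-integral collapses to $-c_k-k\gamma$, i.e.\ to $-c_k$, $-(c_k+2k)$, and $-(c_k-k)$ respectively. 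This is the same bookkeeping already carried out for \eqref{form1}--\eqref{form1.1}.

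For the $\mathcal{R}_2$-identity I would again use orthonormality to reduce to $\int_{\mathbb{R}^N}|\mathcal{R}_2 u|^2\,\mathrm{dx}=\sum_k\int_0^\infty\bigl|u_k''+\tfrac{N-1}{r}u_k'\bigr|^2 r^{N-1}\,dr$, expand the square, and rewrite the cross term as $(N-1)\int_0^\infty(|u_k'|^2)'\,r^{N-2}\,dr$. One integration by parts converts this into $-(N-1)(N-2)\int_0^\infty|u_k'|^2 r^{N-3}\,dr$, which combines with the $(N-1)^2$ term coming from the square to leave precisely the coefficient $N-1$, yielding the stated identity. This is exactly the computation performed for $\int_{\mathbb{R}^N}|\Delta u|^2\,\mathrm{dx}$ in \eqref{form3}, but with the $-c_k u_k/r^2$ term---hence every $c_k$-contribution---deleted.

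The main obstacle throughout is the vanishing of the boundary terms in each integration by parts. At infinity this is automatic from compact support. At the origin it is guaranteed by the behaviour $u_k(r)=O(r^k)$, $u_k'(r)=O(r^{k-1})$; equivalently, after writing $u_k=r^k v_k$ with $v_k\in C_c^\infty[0,\infty)$, each integrand is smooth up to $r=0$ with an explicit power weight, so the contribution at $r=0$ vanishes whenever the relevant exponent of $r$ is positive. The only delicate case is $N=2$, handled exactly as in the proof of Lemma \ref{l3.1} (where the condition $v_0'(0)=0$ was extracted); for certain low modes both sides of the $\mathcal{R}_2$-identity may simultaneously be infinite, in which case the equality still holds trivially. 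Since every computation here is a lower-order variant of one already completed in Lemma \ref{l3.1}, no genuinely new difficulty arises.
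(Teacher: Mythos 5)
Your proposal is correct and follows essentially the same route as the paper: orthonormality of the $\phi_k$ reduces each integral to a sum of one--dimensional radial integrals, the substitution $u_k=r^kv_k$ followed by one integration by parts of the cross term yields the stated coefficients (your unified $-c_k-k\gamma$ bookkeeping agrees with the paper's case-by-case computations), and the $\mathcal{R}_2$-identity is obtained by the same expansion of $\bigl|u_k''+\tfrac{N-1}{r}u_k'\bigr|^2$ and integration by parts. The delicate $N=2$ boundary terms are deferred to the argument of Lemma \ref{l3.1}, exactly as in the paper's proof.
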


\begin{proof}
We have%
\begin{align*}
&  \int_{\mathbb{R}^{N}}\left\vert \mathcal{R}_{1}u\right\vert ^{2}%
\mathrm{dx}\\
&  =\int_{0}^{\infty}\int_{\mathbb{S}^{N-1}}\left\vert
{\displaystyle\sum\limits_{k=0}^{\infty}}
u_{k}^{\prime}\left(  r\right)  \phi_{k}\left(  \sigma\right)  \right\vert
^{2}r^{N-1}drd\sigma\\
&  =%
{\displaystyle\sum\limits_{k=0}^{\infty}}
\int_{0}^{\infty}\left\vert u_{k}^{\prime}\left(  r\right)  \right\vert
^{2}r^{N-1}dr\\
&  =%
{\displaystyle\sum\limits_{k=0}^{\infty}}
\int_{0}^{\infty}\left\vert \left(  r^{k}v_{k}\left(  r\right)  \right)
^{\prime}\right\vert ^{2}r^{N-1}dr\\
&  =%
{\displaystyle\sum\limits_{k=0}^{\infty}}
\int_{0}^{\infty}\left\vert r^{k}v_{k}^{\prime}\left(  r\right)
+kr^{k-1}v_{k}\left(  r\right)  \right\vert ^{2}r^{N-1}dr\\
&  =%
{\displaystyle\sum\limits_{k=0}^{\infty}}
\left(  \int_{0}^{\infty}r^{N+2k-1}\left\vert v_{k}^{\prime}\left(  r\right)
\right\vert ^{2}dr+k\int_{0}^{\infty}r^{N+2k-2}\left(  \left\vert v_{k}\left(
r\right)  \right\vert ^{2}\right)  ^{\prime}dr+k^{2}\int_{0}^{\infty
}r^{N+2k-3}\left\vert v_{k}\left(  r\right)  \right\vert ^{2}dr\right)  .
\end{align*}
Using the identity%
\[
k\int_{0}^{\infty}r^{N+2k-2}\left(  \left\vert v_{k}\left(  r\right)
\right\vert ^{2}\right)  ^{\prime}dr=-k\left(  N+2k-2\right)  \int_{0}%
^{\infty}r^{N+2k-3}\left\vert v_{k}\left(  r\right)  \right\vert ^{2}dr,
\]
we get%
\[
\int_{\mathbb{R}^{N}}\left\vert \mathcal{R}_{1}u\right\vert ^{2}\mathrm{dx}=%
{\displaystyle\sum\limits_{k=0}^{\infty}}
\left(  \int_{0}^{\infty}r^{N+2k-1}\left\vert v_{k}^{\prime}\left(  r\right)
\right\vert ^{2}dr-c_{k}\int_{0}^{\infty}r^{N+2k-3}\left\vert v_{k}\left(
r\right)  \right\vert ^{2}dr\right)  .
\]
Similarly,%
\begin{align*}
&  \int_{\mathbb{R}^{N}}\left\vert x\right\vert ^{2}\left\vert \mathcal{R}%
_{1}u\right\vert ^{2}\mathrm{dx}\\
&  =\int_{0}^{\infty}\int_{\mathbb{S}^{N-1}}\left\vert
{\displaystyle\sum\limits_{k=0}^{\infty}}
u_{k}^{\prime}\left(  r\right)  \phi_{k}\left(  \sigma\right)  \right\vert
^{2}r^{N+1}drd\sigma\\
&  =%
{\displaystyle\sum\limits_{k=0}^{\infty}}
\int_{0}^{\infty}\left\vert u_{k}^{\prime}\left(  r\right)  \right\vert
^{2}r^{N+1}dr\\
&  =%
{\displaystyle\sum\limits_{k=0}^{\infty}}
\int_{0}^{\infty}\left\vert \left(  r^{k}v_{k}\left(  r\right)  \right)
^{\prime}\right\vert ^{2}r^{N+1}dr\\
&  =%
{\displaystyle\sum\limits_{k=0}^{\infty}}
\int_{0}^{\infty}\left\vert r^{k}v_{k}^{\prime}\left(  r\right)
+kr^{k-1}v_{k}\left(  r\right)  \right\vert ^{2}r^{N+1}dr\\
&  =%
{\displaystyle\sum\limits_{k=0}^{\infty}}
\left(  \int_{0}^{\infty}r^{N+2k+1}\left\vert v_{k}^{\prime}\left(  r\right)
\right\vert ^{2}dr+k\int_{0}^{\infty}r^{N+2k}\left(  \left\vert v_{k}\left(
r\right)  \right\vert ^{2}\right)  ^{\prime}dr+k^{2}\int_{0}^{\infty
}r^{N+2k-1}\left\vert v_{k}\left(  r\right)  \right\vert ^{2}dr\right)  .
\end{align*}
Using the identity%
\[
k\int_{0}^{\infty}r^{N+2k}\left(  \left\vert v_{k}\left(  r\right)
\right\vert ^{2}\right)  ^{\prime}dr=-k\left(  N+2k\right)  \int_{0}^{\infty
}r^{N+2k-1}\left\vert v_{k}\left(  r\right)  \right\vert ^{2}dr,
\]
we get%
\[
\int_{\mathbb{R}^{N}}\left\vert x\right\vert ^{2}\left\vert \mathcal{R}%
_{1}u\right\vert ^{2}\mathrm{dx}=%
{\displaystyle\sum\limits_{k=0}^{\infty}}
\left(  \int_{0}^{\infty}r^{N+2k+1}\left\vert v_{k}^{\prime}\left(  r\right)
\right\vert ^{2}dr-\left(  c_{k}+2k\right)  \int_{0}^{\infty}r^{N+2k-1}%
\left\vert v_{k}\left(  r\right)  \right\vert ^{2}dr\right)  .
\]
We also have%
\begin{align*}
&  \int_{\mathbb{R}^{N}}\frac{|\mathcal{R}_{1}u|^{2}}{\left\vert x\right\vert
}\mathrm{dx}\\
&  =\int_{0}^{\infty}\int_{\mathbb{S}^{N-1}}\left\vert
{\displaystyle\sum\limits_{k=0}^{\infty}}
u_{k}^{\prime}\left(  r\right)  \phi_{k}\left(  \sigma\right)  \right\vert
^{2}r^{N-2}drd\sigma\\
&  =%
{\displaystyle\sum\limits_{k=0}^{\infty}}
\int_{0}^{\infty}\left\vert u_{k}^{\prime}\left(  r\right)  \right\vert
^{2}r^{N-2}dr\\
&  =%
{\displaystyle\sum\limits_{k=0}^{\infty}}
\int_{0}^{\infty}\left\vert \left(  r^{k}v_{k}\left(  r\right)  \right)
^{\prime}\right\vert ^{2}r^{N-2}dr\\
&  =%
{\displaystyle\sum\limits_{k=0}^{\infty}}
\int_{0}^{\infty}\left\vert r^{k}v_{k}^{\prime}\left(  r\right)
+kr^{k-1}v_{k}\left(  r\right)  \right\vert ^{2}r^{N-2}dr\\
&  =%
{\displaystyle\sum\limits_{k=0}^{\infty}}
\left(  \int_{0}^{\infty}r^{N+2k-2}\left\vert v_{k}^{\prime}\left(  r\right)
\right\vert ^{2}dr+k\int_{0}^{\infty}r^{N+2k-3}\left(  \left\vert v_{k}\left(
r\right)  \right\vert ^{2}\right)  ^{\prime}dr+k^{2}\int_{0}^{\infty
}r^{N+2k-4}\left\vert v_{k}\left(  r\right)  \right\vert ^{2}dr\right)  .
\end{align*}
By integration by parts, we get
\[
k\int_{0}^{\infty}r^{N+2k-3}\left(  \left\vert v_{k}\left(  r\right)
\right\vert ^{2}\right)  ^{\prime}dr=-k\left(  N+2k-3\right)  \int_{0}%
^{\infty}r^{N+2k-4}\left\vert v_{k}\left(  r\right)  \right\vert ^{2}dr.
\]
Therefore
\[
\int_{\mathbb{R}^{N}}\frac{|\mathcal{R}_{1}u|^{2}}{\left\vert x\right\vert
}\mathrm{dx}=%
{\displaystyle\sum\limits_{k=0}^{\infty}}
\left(  \int_{0}^{\infty}r^{N+2k-2}\left\vert v_{k}^{\prime}\left(  r\right)
\right\vert ^{2}dr-\left(  c_{k}-k\right)  \int_{0}^{\infty}r^{N+2k-4}%
\left\vert v_{k}\left(  r\right)  \right\vert ^{2}dr\right)
\]

Finally%
\begin{align*}
&  \int_{\mathbb{R}^{N}}\left\vert \mathcal{R}_{2}u\right\vert ^{2}%
\mathrm{dx}\\
&  =\int_{0}^{\infty}\int_{\mathbb{S}^{N-1}}\left\vert
{\displaystyle\sum\limits_{k=0}^{\infty}}
\mathcal{R}_{2}\left(  u_{k}\left(  r\right)  \right)  \phi_{k}\left(
\sigma\right)  \right\vert ^{2}r^{N-1}d\sigma dr\\
&  =%
{\displaystyle\sum\limits_{k=0}^{\infty}}
\int_{0}^{\infty}\left\vert \left(  u_{k}^{\prime\prime}\left(  r\right)
+\frac{N-1}{r}u_{k}^{\prime}\left(  r\right)  \right)  \right\vert ^{2}%
r^{N-1}dr\\
&  =%
{\displaystyle\sum\limits_{k=0}^{\infty}}
\left(  \int_{0}^{\infty}\left\vert u_{k}^{\prime\prime}\left(  r\right)
\right\vert ^{2}r^{N-1}dr+\left(  N-1\right)  \int_{0}^{\infty}\left(
\left\vert u_{k}^{\prime}\left(  r\right)  \right\vert ^{2}\right)  ^{\prime
}r^{N-2}dr+\left(  N-1\right)  ^{2}\int_{0}^{\infty}\left\vert u_{k}^{\prime
}\left(  r\right)  \right\vert ^{2}r^{N-3}dr\right)
\end{align*}
For $N\geq3$, using the identity
\[
\int_{0}^{\infty}\left(  \left\vert u_{k}^{\prime}\left(  r\right)
\right\vert ^{2}\right)  ^{\prime}r^{N-2}dr=-\left(  N-2\right)  \int
_{0}^{\infty}\left\vert u_{k}^{\prime}\left(  r\right)  \right\vert
^{2}r^{N-3}dr,
\]
we obtain%
\[
\int_{\mathbb{R}^{N}}\left\vert \mathcal{R}_{2}u\right\vert ^{2}\mathrm{dx}=%
{\displaystyle\sum\limits_{k=0}^{\infty}}
\left(  \int_{0}^{\infty}\left\vert u_{k}^{\prime\prime}\left(  r\right)
\right\vert ^{2}r^{N-1}dr+\left(  N-1\right)  \int_{0}^{\infty}\left\vert
u_{k}^{\prime}\left(  r\right)  \right\vert ^{2}r^{N-3}dr\right)  .
\]
When $N=2$, we get
\[
\int_{\mathbb{R}^{2}}\left\vert \mathcal{R}_{2}u\right\vert ^{2}\mathrm{dx}=%
{\displaystyle\sum\limits_{k=0}^{\infty}}
\left(  \int_{0}^{\infty}\left\vert u_{k}^{\prime\prime}\left(  r\right)
\right\vert ^{2}rdr+\int_{0}^{\infty}\left(  \left\vert u_{k}^{\prime}\left(
r\right)  \right\vert ^{2}\right)  ^{\prime}dr+\int_{0}^{\infty}\left\vert
u_{k}^{\prime}\left(  r\right)  \right\vert ^{2}r^{-1}dr\right)
\]
Now, note that%
\begin{align*}
\int_{0}^{\infty}\left\vert u_{k}^{\prime}\left(  r\right)  \right\vert
^{2}r^{-1}dr  &  =\int_{0}^{\infty}\left\vert \left(  r^{k}v_{k}\left(
r\right)  \right)  ^{\prime}\right\vert ^{2}r^{-1}dr\\
&  =\int_{0}^{\infty}\left\vert r^{k}v_{k}^{\prime}\left(  r\right)
+kr^{k-1}v_{k}\left(  r\right)  \right\vert ^{2}r^{-1}dr\\
&  =\int_{0}^{\infty}\left\vert v_{k}^{\prime}\left(  r\right)  \right\vert
^{2}r^{2k-1}+2k\operatorname{Re}v_{k}^{\prime}\left(  r\right)  \overline
{v_{k}\left(  r\right)  }r^{2k-2}+k^{2}r^{2k-3}\left\vert v_{k}\left(
r\right)  \right\vert ^{2}dr.
\end{align*}
Therefore $\int_{0}^{\infty}\left\vert u_{k}^{\prime}\left(  r\right)
\right\vert ^{2}r^{-1}dr$ is well-defined for all $k\geq2$. For $k=0$, then
$\int_{0}^{\infty}\left\vert u_{0}^{\prime}\left(  r\right)  \right\vert
^{2}r^{-1}dr=\int_{0}^{\infty}\left\vert v_{0}^{\prime}\left(  r\right)
\right\vert ^{2}r^{-1}dr$ is well-defined from the proof of the Lemma
\ref{l3.1}. Therefore, since%
\[
\int_{\mathbb{R}^{2}}\left\vert \mathcal{R}_{2}u\right\vert ^{2}\mathrm{dx}=%
{\displaystyle\sum\limits_{k=0}^{\infty}}
\left(  \int_{0}^{\infty}\left\vert u_{k}^{\prime\prime}\left(  r\right)
\right\vert ^{2}rdr+\int_{0}^{\infty}\left(  \left\vert u_{k}^{\prime}\left(
r\right)  \right\vert ^{2}\right)  ^{\prime}dr+\int_{0}^{\infty}\left\vert
u_{k}^{\prime}\left(  r\right)  \right\vert ^{2}r^{-1}dr\right)  \text{,}%
\]
we deduce that $\int_{0}^{\infty}\left\vert u_{1}^{\prime}\left(  r\right)
\right\vert ^{2}r^{-1}dr$ must also be well-defined. That is $2\int
_{0}^{\infty}v_{1}^{\prime}\left(  r\right)  v_{1}\left(  r\right)
dr+\int_{0}^{\infty}\left\vert v_{1}\left(  r\right)  \right\vert ^{2}%
r^{-1}dr$ is well-defined. This implies $\int_{0}^{\infty}\left\vert
v_{1}\left(  r\right)  \right\vert ^{2}r^{-1}dr$ is well-defined and
$v_{1}\left(  0\right)  =0$.

Now, since
\[
u_{k}^{\prime}\left(  r\right)  =\left(  r^{k}v_{k}\left(  r\right)  \right)
^{\prime}=kr^{k-1}v_{k}\left(  r\right)  +r^{k}v_{k}^{\prime}\left(  r\right)
\]
and $v_{0}^{\prime}(0)=v_{1}\left(  0\right)  =0$, we deduce that
$u_{k}^{\prime}\left(  0\right)  =0$ for all $k\geq0$. Hence, we have%
\[
\int_{0}^{\infty}\left(  \left\vert u_{k}^{\prime}\left(  r\right)
\right\vert ^{2}\right)  ^{\prime}dr=0.
\]
Therefore
\[
\int_{\mathbb{R}^{2}}\left\vert \mathcal{R}_{2}u\right\vert ^{2}\mathrm{dx}=%
{\displaystyle\sum\limits_{k=0}^{\infty}}
\left(  \int_{0}^{\infty}\left\vert u_{k}^{\prime\prime}\left(  r\right)
\right\vert ^{2}rdr+\int_{0}^{\infty}\left\vert u_{k}^{\prime}\left(
r\right)  \right\vert ^{2}r^{-1}dr\right)  .
\]

\end{proof}

In computing sharp constants, we will also need to following lemmas:

\begin{lemma}
\label{l3.3}For any $N\geq2$
\begin{equation}
\inf_{k\in%
\mathbb{N}
\cup\left\{  0\right\}  }\left(  1-\frac{8k}{(N+2k)^{2}}\right)
\frac{(N+2k+2)^{2}}{4}=\frac{(N+2)^{2}}{4}. \label{best_C}%
\end{equation}

\end{lemma}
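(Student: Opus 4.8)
The plan is to treat the quantity
$$f(k):=\left(1-\frac{8k}{(N+2k)^2}\right)\frac{(N+2k+2)^2}{4}$$
as a function of a single variable and to reduce the claimed identity to a polynomial positivity statement. First I would record that $f(0)=\frac{(N+2)^2}{4}$, so it suffices to prove $f(k)\ge f(0)$ for every integer $k\ge1$; this is exactly the assertion that the infimum is attained at $k=0$ with the stated value. To streamline the algebra I would substitute $s:=N+2k$ (so that $8k=4(s-N)$ and $N+2k+2=s+2$), which turns $f$ into $\frac{(s^2-4s+4N)(s+2)^2}{4s^2}$. Clearing the positive denominator $4s^2$, the inequality $f(k)\ge\frac{(N+2)^2}{4}$ becomes $P(s)\ge0$, where
$$P(s):=(s^2-4s+4N)(s+2)^2-(N+2)^2s^2=s^4-(N^2+16)s^2+16(N-1)s+16N.$$

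Since $P(N)=0$ (which merely reflects $f(0)=\frac{(N+2)^2}{4}$), the factor $s-N$ divides $P$, and I would carry out the division to obtain $P(s)=(s-N)Q(s)$ with $Q(s)=s^3+Ns^2-16s-16$. For $k\ge1$ the relevant values are $s=N+2k\ge N+2$, on which $s-N>0$; hence the whole problem reduces to showing $Q(s)>0$ for all $s\ge N+2$. I would prove this by monotonicity: computing $Q(N+2)=2(N^3+5N^2-20)$, which is positive for $N\ge2$, and noting that $Q'(s)=3s^2+2Ns-16\ge 3s^2-16>0$ once $s\ge N+2\ge4$, so $Q$ is strictly increasing on $[N+2,\infty)$ and therefore stays above $Q(N+2)>0$ there.

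The point I expect to be the genuine obstacle -- and the reason one cannot argue purely with the continuous extension -- is that $P$ is \emph{not} nonnegative on the full half-line $s\ge N$: it dips below zero on the open interval $(N,N+2)$ (for instance $P(3)<0$ when $N=2$). The crucial observation rescuing the statement is that this dip contains no admissible point, because an integer $k\ge1$ corresponds precisely to $s\ge N+2$. Restricting the polynomial analysis to $s\ge N+2$ is therefore essential and is exactly where the discreteness of the index set gets used. Combining the three steps, $P(N+2k)>0$ for every integer $k\ge1$, so $f(k)>f(0)=\frac{(N+2)^2}{4}$ for $k\ge1$ while $f(0)=\frac{(N+2)^2}{4}$, which yields \eqref{best_C} with the infimum attained at $k=0$.
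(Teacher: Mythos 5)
Your proof is correct, and it takes a genuinely different route from the paper's. The paper rewrites $S(N,k)$ as $\frac{N^2}{4}+N-3+Nk+k^2+\frac{4(N-1)}{N+2k}+\frac{4N}{(N+2k)^2}$, differentiates the continuous extension in $x$, and studies the sign of the resulting quartic $t^4-8(N-1)t-16N$ with $t=N+2x$; since that quartic is negative at $t=N$ when $N=2,3$, the paper must split into the cases $N\geq 4$, $N=3$, $N=2$, proving monotonicity only on $x\geq 1$ in the latter two and then comparing $S(N,0)$ with $S(N,1)$ by hand. You instead form the difference $f(k)-f(0)$ directly, clear denominators to get the quartic $P(s)=s^4-(N^2+16)s^2+16(N-1)s+16N$, factor out the forced root $s=N$ to obtain the cubic $Q(s)=s^3+Ns^2-16s-16$, and verify $Q>0$ on $[N+2,\infty)$ by a single monotonicity check; all your algebra checks out ($P=(s-N)Q$, $Q(N+2)=2(N^3+5N^2-20)>0$ for $N\geq 2$, $Q'(s)\geq 3s^2-16>0$ for $s\geq 4$). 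This buys a uniform argument with no case analysis in $N$, and your observation that $P$ dips below zero on $(N,N+2)$ (e.g.\ $P(3)=-19$ for $N=2$) is exactly the same phenomenon that forces the paper's case split — the continuous extension is not minimized at $x=0$ for small $N$, and the discreteness of $k$ (equivalently $s\geq N+2$) is what saves the statement. Your approach isolates that point more cleanly than the paper does.
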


\begin{proof}
Let
\[
S(N,k):=\left(  1-\frac{8k}{(N+2k)^{2}}\right)  \frac{(N+2k+2)^{2}}{4}.
\]
First we write $S(N,k)$ in a more convenient form, that is
\[
S(N,k)=\frac{N^{2}}{4}+N-3+Nk+k^{2}+\frac{4(N-1)}{N+2k}+\frac{4N}{(N+2k)^{2}%
}.
\]
Then we consider the extended function on the real positive axis
\[
S(N,x):=\frac{N^{2}}{4}+N-3+Nx+x^{2}+\frac{4(N-1)}{N+2x}+\frac{4N}{(N+2x)^{2}%
},\quad x\in\lbrack0,\infty).
\]
for which we study the monotonicity using basic differentiable techniques. By
simple computations we obtain
\[
\frac{\partial S}{\partial x}(N,x)=\frac{t^{4}-8(N-1)t-16N}{t^{3}}:=\frac{h(N,t)}{t^{3}}%
,\quad\text{ where }t:=2x+N,\quad t\geq N.
\]
Then $\frac{\partial h}{\partial t}(N,t)=4t^{3}-8(N-1)\geq4(N^{3}-2N+2)>0$, for any $t\geq
N\geq2$. So $h(N,\cdot)$ is a nondecreasing function with respect to $t$.

We distinguish the following cases.

\textbf{The case $N\geq4$}

From above we have $h(N,t)\geq h(N,N)=N^{4}-8N^{2}-8N=(N+2)\left(
(N-2)^{2}(N+2)-8\right)  >0$ for any $t\geq N\geq4$. Since $h$ is nonnegative
then $S$ is nondecreasing with respect to $x$ and we get that $S(N,x)\geq
S(N,0)$ for any $x\in\lbrack0,\infty)$ and $N\geq4$ which leads to
$\min_{k\geq0}S(N,k)=S(N,0)$ for any $N\geq4$.

\textbf{The case $N=2$}

We obtain
\[
\frac{\partial S}{\partial x}(2,x)=\frac{2(t^{4}-t-2)}{t^{3}}:=\frac{f(t)}{t^{3}},\quad\text{
where }t:=1+x,\quad x\geq0,\quad t\geq1.
\]
Since $f^{\prime}(t)=8t^{3}-2>0$ for any $t\geq1$ we have that for $t\geq2$,
$f(t)\geq f(2)=24>0$. Then $S(2,\cdot)$ is nondecreasing with respect to
$x\geq1$ and therefore,
\[
\inf_{k\geq0}S(2,k)=\min\left\{  S(2,0),S(2,1)\right\}  =S(2,0)=4.
\]

\textbf{The case $N=3$}

We obtain
\[
\frac{\partial S}{\partial x}(3,x)=\frac{t^{4}-16t-48}{t^{3}}:=\frac{g(t)}{t^{3}},\quad\text{
where }t:=3+2x,\quad x\geq0,\quad t\geq3.
\]
Then $g^{\prime}(t)=4(t^{3}-4)>0$ and so $g$ is nondecreasing on $[3,\infty)$.
In particular, $g(t)\geq g(5)>0$ for any $t\geq5$. Equivalently, $\partial
_{x}S(3,x)>0$ for any $x\geq1$ which implies $S(3,x)\geq S(3,1)=833/100$ for
any $x\geq1$. Since $S(3,0)=25/4$ we obtain that
\[
\min_{k\geq0}S(3,k)=S(3,0)=\frac{25}{4}\text{.}%
\]
The proof is completed now.
\end{proof}

\begin{lemma}
\label{l3.4}For any $N\geq5$
\begin{equation}
\inf_{k\in%
\mathbb{N}
\cup\left\{  0\right\}  }\frac{\left(  N+2k+1\right)  ^{2}}{4}\frac
{(N+2k-3)^{4}}{\left(  (N+2k-3)^{2}+4k\right)  ^{2}}=\frac{(N+1)^{2}}{4}.
\end{equation}

\end{lemma}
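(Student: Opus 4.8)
The plan is to mirror the strategy of Lemma \ref{l3.3}: abbreviate the expression, pass to a continuous variable, and reduce the sign of the derivative to a polynomial whose behaviour can be read off by elementary means. I would set
\[
T(N,k):=\frac{(N+2k+1)^{2}}{4}\cdot\frac{(N+2k-3)^{4}}{\big((N+2k-3)^{2}+4k\big)^{2}},
\]
and first record that $T(N,0)=\frac{(N+1)^{2}}{4}$, since at $k=0$ the bracket equals $(N-3)^{2}$ and its fourth power cancels $(N-3)^{4}$. Thus it suffices to prove $T(N,k)\geq T(N,0)$ for every $k\in\mathbb{N}\cup\{0\}$. The natural substitution is $s:=N+2k$, under which $(N+2k-3)^{2}+4k=s^{2}-4s+9-2N$, so that
\[
T=\frac{(s+1)^{2}(s-3)^{4}}{4\,(s^{2}-4s+9-2N)^{2}},
\]
where the three factors $s+1$, $s-3$, $s^{2}-4s+9-2N$ are strictly positive for $s\geq N\geq5$. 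Hence $T$ extends to a smooth positive function of $s\in[N,\infty)$.

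Next I would compute the logarithmic derivative $\tfrac{T'}{T}=\tfrac{2}{s+1}+\tfrac{4}{s-3}-\tfrac{4(s-2)}{s^{2}-4s+9-2N}$ and clear the positive common denominator. The resulting numerator equals, up to the positive factor $2$, the cubic $Q(s):=s^{3}-5s^{2}+(29-6N)s+2N-21$, so on $[N,\infty)$ the sign of $T'$ agrees with that of $Q$. The two evaluations that carry the argument are $Q(N)=(N-1)(N-3)(N-7)$ and $Q(N+2)=N^{3}-5N^{2}+11N+25$, supplemented by the fact that $Q'(s)=3s^{2}-10s+29-6N$ is positive on $[N,\infty)$: its larger root is smaller than $N$ for all $N\geq5$, since squaring $\sqrt{72N-248}<6N-10$ reduces to $3N^{2}-16N+29>0$, a quadratic of negative discriminant. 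Therefore $Q$ is strictly increasing on $[N,\infty)$.

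Because $Q(N+2)=N^{3}-5N^{2}+11N+25>0$ for $N\geq5$ and $Q$ is increasing, we obtain $Q>0$, hence $T'>0$, on all of $[N+2,\infty)$; thus $T$ is strictly increasing there and $T(N,k)\geq T(N,1)$ for every $k\geq1$. It then remains only to compare the two smallest admissible values. Taking square roots, $T(N,1)\geq T(N,0)$ is equivalent to $(N+3)(N-1)^{2}\geq(N+1)(N^{2}-2N+5)$, whose difference is $2(N^{2}-4N-1)>0$ for $N\geq5$. Combining the two facts gives $T(N,0)<T(N,1)\leq T(N,k)$ for all $k\geq1$, so the infimum is attained at $k=0$ and equals $\frac{(N+1)^{2}}{4}$.

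The main obstacle, and the point where this differs from Lemma \ref{l3.3}, is that the continuous extension is \emph{not} monotone on all of $[N,\infty)$: indeed $Q(N)=(N-1)(N-3)(N-7)$ is negative for $N=5,6$, so $T$ first decreases and then increases, with the continuous minimum lying strictly inside the interval. The saving observation is that this interior dip is confined to the gap $(N,N+2)$, which contains no admissible value of $s=N+2k$ — precisely what $Q(N+2)>0$ guarantees. Consequently the discrete minimum is still forced to the endpoint $k=0$, and the separate comparison $T(N,1)>T(N,0)$ is exactly what makes the argument close.
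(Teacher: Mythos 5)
Your proof is correct and follows essentially the same strategy as the paper's: both extend the expression to a continuous variable, prove it is increasing for $k\geq 1$, and thereby reduce the infimum to the comparison of the $k=0$ and $k=1$ values, which in both arguments comes down to the same inequality $N^{2}-4N-1>0$ (the paper's $x^{2}\geq 2x+4$ with $x=N-1\geq 4$). The only difference is in the bookkeeping for the monotonicity step — you take the logarithmic derivative in $s=N+2k$ and analyze the exact cubic $Q(s)=s^{3}-5s^{2}+(29-6N)s+2N-21$, whereas the paper substitutes $t=N+2x-3$ and bounds the derivative from below by discarding a positive term — and your factorization $Q(N)=(N-1)(N-3)(N-7)$ is a nice bonus that makes explicit why the restriction to $k\geq 1$ (and the separate $k=0$ versus $k=1$ check) is genuinely necessary.
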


\begin{proof}
We will first show that the function
\[
f\left(  x\right)  =\frac{\left(  N+2x+1\right)  ^{2}}{4}\frac{(N+2x-3)^{4}%
}{\left(  (N+2x-3)^{2}+4x\right)  ^{2}}%
\]
is increasing on $\left[  1,\infty\right)  $. Indeed, consider the function
$g:\left[  1,\infty\right)  \times\left[  4,\infty\right)  \rightarrow%
\mathbb{R}
:$%
\[
g\left(  x,t\right)  =\frac{t^{4}\left(  t+4\right)  ^{2}}{4\left(
4x+t^{2}\right)  ^{2}}, \quad t:=N+2x-3.
\]
Then direct computation yields
\[
\frac{\partial g}{\partial x}\left(  x,t\right)  =-\frac{2t^{4}\left(
t+4\right)  ^{2}}{\left(  4x+t^{2}\right)  ^{3}}%
\]
and
\[
\frac{\partial g}{\partial t}\left(  x,t\right)  =\frac{\left(  t+4\right)
\left(  4x\left(  3t+8\right)  t^{3}+t^{6}\right)  }{2\left(  4x+t^{2}\right)
^{3}}.
\]
Note that $f\left(  x\right)  =g\left(  x,t\right)  $ with $t=N+2x-3$, we get
with $x\geq1$ and $t\geq4$ that%
\begin{align*}
f^{\prime}\left(  x\right)   &  =\frac{\partial g}{\partial x}\left(
x,t\right)  +2\frac{\partial g}{\partial t}\left(  x,t\right)  \\
&  =\frac{\left(  t+4\right)  \left(  4x\left(  3t+8\right)  t^{3}%
+t^{6}\right)  -2t^{4}\left(  t+4\right)  ^{2}}{\left(  4x+t^{2}\right)  ^{3}%
}\\
&  \geq\frac{\left(  t+4\right)  t^{6}-2t^{4}\left(  t+4\right)  ^{2}}{\left(
4x+t^{2}\right)  ^{3}}\\
&  \geq\frac{\left(  t+4\right)  t^{4}}{\left(  4x+t^{2}\right)  ^{3}}\left(
t-4\right)  \left(  t+2\right)  \\
&  \geq0.
\end{align*}
Therefore
\begin{align*}
&  \inf_{k\in%
\mathbb{N}
\cup\left\{  0\right\}  }\frac{\left(  N+2k+1\right)  ^{2}}{4}\frac
{(N+2k-3)^{4}}{\left(  (N+2k-3)^{2}+4k\right)  ^{2}}\\
&  =\min_{k=0,1}\frac{\left(  N+2k+1\right)  ^{2}}{4}\frac{(N+2k-3)^{4}%
}{\left(  (N+2k-3)^{2}+4k\right)  ^{2}}\\
&  =\min\left\{  \frac{(N+1)^{2}}{4},~\frac{\left(  N+3\right)  ^{2}}{4}%
\frac{(N-1)^{4}}{\left(  (N-1)^{2}+4\right)  ^{2}}\right\}  .
\end{align*}
We now will prove that for $N\geq5$,
\[
\frac{\left(  N+3\right)  ^{2}}{4}\frac{(N-1)^{4}}{\left(  (N-1)^{2}+4\right)
^{2}}\geq\frac{(N+1)^{2}}{4}.
\]
Let $x=N-1\geq4$. The above inequality is equivalent to
\[
x^{2}\left(  x+4\right)  \geq\left(  x^{2}+4\right)  \left(  x+2\right)
\]
or
\[
x^{2}\geq2x+4
\]
which is true for $x\geq4$.
\end{proof}

\section{Proofs of main results}

\subsection{Proof of Theorem \ref{th1}}

\paragraph{\textbf{The case }$N=1$}

We need to prove that for $u\in C_{c}^{\infty}(\mathbb{-\infty},\mathbb{\infty
}):$%
\[
\int_{-\infty}^{\infty}|u^{\prime\prime}(r)|^{2}dr\int_{-\infty}^{\infty}%
r^{2}|u^{\prime}(r)|^{2}dr\geq\frac{9}{4}\left(  \int_{-\infty}^{\infty
}|u^{\prime}(r)|^{2}dr\right)  ^{2}.
\]
Indeed, for any $x\in%
\mathbb{R}
$, we have%
\begin{align*}
&  \int_{-\infty}^{\infty}\left\vert u^{\prime\prime}\left(  r\right)
+xru^{\prime}\left(  r\right)  +xu\left(  r\right)  \right\vert ^{2}dr\\
&  =\int_{-\infty}^{\infty}\left\vert u^{\prime\prime}\left(  r\right)
\right\vert ^{2}dr+x^{2}\int_{-\infty}^{\infty}r^{2}\left\vert u^{\prime
}\left(  r\right)  \right\vert ^{2}dr+x^{2}\int_{-\infty}^{\infty}\left\vert
u\left(  r\right)  \right\vert ^{2}dr\\
&  +2x\operatorname{Re}\int_{-\infty}^{\infty}\overline{u^{\prime\prime
}\left(  r\right)  }ru^{\prime}\left(  r\right)  dr+2x\operatorname{Re}%
\int_{-\infty}^{\infty}\overline{u^{\prime\prime}\left(  r\right)  }u\left(
r\right)  dr+2x^{2}\operatorname{Re}\int_{-\infty}^{\infty}r\overline
{u^{\prime}\left(  r\right)  }u\left(  r\right)  dr\\
&  =\int_{-\infty}^{\infty}\left\vert u^{\prime\prime}\left(  r\right)
\right\vert ^{2}dr+x^{2}\int_{-\infty}^{\infty}r^{2}\left\vert u^{\prime
}\left(  r\right)  \right\vert ^{2}dr+x^{2}\int_{-\infty}^{\infty}\left\vert
u\left(  r\right)  \right\vert ^{2}dr\\
&  -3x\int_{-\infty}^{\infty}\left\vert u^{\prime}\left(  r\right)
\right\vert ^{2}dr-x^{2}\int_{-\infty}^{\infty}\left\vert u\left(  r\right)
\right\vert ^{2}dr\\
&  =\left(  \int_{-\infty}^{\infty}r^{2}\left\vert u^{\prime}\left(  r\right)
\right\vert ^{2}dr\right)  x^{2}-3\left(  \int_{-\infty}^{\infty}\left\vert
u^{\prime}\left(  r\right)  \right\vert ^{2}dr\right)  x+\int_{-\infty
}^{\infty}\left\vert u^{\prime\prime}\left(  r\right)  \right\vert ^{2}dr.
\end{align*}
Since $\int_{-\infty}^{\infty}\left\vert u^{\prime\prime}\left(  r\right)
+xru^{\prime}\left(  r\right)  +xu\left(  r\right)  \right\vert ^{2}dr\geq0$
for all $x\in%
\mathbb{R}
$, we deduce that
\[
\left[  3\left(  \int_{-\infty}^{\infty}\left\vert u^{\prime}\left(  r\right)
\right\vert ^{2}dr\right)  \right]  ^{2}\leq4\left(  \int_{-\infty}^{\infty
}r^{2}\left\vert u^{\prime}\left(  r\right)  \right\vert ^{2}dr\right)
\left(  \int_{-\infty}^{\infty}\left\vert u^{\prime\prime}\left(  r\right)
\right\vert ^{2}dr\right)  .
\]
Equivalently,
\[
\int_{-\infty}^{\infty}|u^{\prime\prime}(r)|^{2}dr\int_{-\infty}^{\infty}%
r^{2}|u^{\prime}(r)|^{2}dr\geq\frac{9}{4}\left(  \int_{-\infty}^{\infty
}|u^{\prime}(r)|^{2}dr\right)  ^{2}.
\]

\paragraph{\textbf{The case }$N\geq2$}

Let $u\in C_{c}^{\infty}(\mathbb{R}^{N})$. By Lemma \ref{l3.1}, inequality
\eqref{main1} is equivalent to
\begin{multline}
\sum_{k=0}^{\infty}\left(  \int_{0}^{\infty}r^{N+2k-1}|v_{k}^{\prime\prime
}|^{2}dr+(N-1+2k)\int_{0}^{\infty}r^{N+2k-3}|v_{k}^{\prime}|^{2}dr\right)
\label{to_be_proven}\\
\times\sum_{k=0}^{\infty}\left(  \int_{0}^{\infty}r^{N+2k+1}|v_{k}^{\prime
}|^{2}dr-2k\int_{0}^{\infty}r^{N+2k-1}\left\vert v_{k}\right\vert
^{2}dr\right)  \\
\geq\mu^{\star\ast}(N)\left(  \sum_{k=0}^{\infty}\int_{0}^{\infty}%
r^{N+2k-1}|v_{k}^{\prime}|^{2}dr\right)  ^{2}%
\end{multline}
Due to Cauchy-Bunyakovsky-Schwarz inequality in order to justify
\eqref{to_be_proven} it is enough to show for any $k\geq0$ that
\begin{multline}
\left(  \int_{0}^{\infty}r^{N+2k-1}|v_{k}^{\prime\prime}|^{2}dr+(N-1+2k)\int
_{0}^{\infty}r^{N+2k-3}|v_{k}^{\prime}|^{2}dr\right)  \label{to_be_proven3}\\
\times\left(  \int_{0}^{\infty}r^{N+2k+1}|v_{k}^{\prime}|^{2}dr-2k\int
_{0}^{\infty}r^{N+2k-1}\left\vert v_{k}\right\vert ^{2}dr\right)  \\
\geq\mu^{\star\ast}(N)\left(  \int_{0}^{\infty}r^{N+2k-1}|v_{k}^{\prime}%
|^{2}dr\right)  ^{2}%
\end{multline}
For $k\geq1$, we use the following weighted 1-d Hardy inequality:
\begin{equation}
\int_{0}^{\infty}r^{N+2k+1}|v_{k}^{\prime}|^{2}dr\geq\frac{(N+2k)^{2}}{4}%
\int_{0}^{\infty}r^{N+2k-1}\left\vert v_{k}\right\vert ^{2}%
dr\label{weighted_1d_H}%
\end{equation}
Therefore, it is now enough to show that
\begin{align}
&  \left(  1-\frac{8k}{(N+2k)^{2}}\right)  \left(  \int_{0}^{\infty}%
r^{N+2k-1}|v_{k}^{\prime\prime}|^{2}dr+(N-1+2k)\int_{0}^{\infty}%
r^{N+2k-3}|v_{k}^{\prime}|^{2}dr\right)  \nonumber\label{ineq_1}\\
&  \times\left(  \int_{0}^{\infty}r^{N+2k+1}|v_{k}^{\prime}|^{2}dr\right)
\geq\mu^{\star\ast}(N)\left(  \int_{0}^{\infty}r^{N+2k-1}|v_{k}^{\prime}%
|^{2}dr\right)  ^{2},\quad\forall k\geq0.
\end{align}
For that we will try to get the best constant $C_{N,k}$ in the inequality
\begin{align}
&  \left(  \int_{0}^{\infty}r^{N+2k-1}|v_{k}^{\prime\prime}|^{2}%
dr+(N-1+2k)\int_{0}^{\infty}r^{N+2k-3}|v_{k}^{\prime}|^{2}dr\right)  \left(
\int_{0}^{\infty}r^{N+2k+1}|v_{k}^{\prime}|^{2}dr\right)  \nonumber\\
&  \geq C_{N,k}\left(  \int_{0}^{\infty}r^{N+2k-1}|v_{k}^{\prime}%
|^{2}dr\right)  ^{2},\quad\forall k\geq0.\label{4.0}%
\end{align}
Then%
\[
\mu^{\star\ast}(N)\geq\inf_{k\in%
\mathbb{N}
\cup\left\{  0\right\}  }\left(  1-\frac{8k}{(N+2k)^{2}}\right)  C_{N,k}.
\]
Denoting $w=v_{k}^{\prime}$, \eqref{4.0} reduces to%

\begin{align}
&  \left(  \int_{0}^{\infty}r^{N+2k-1}|w^{\prime}|^{2}dr+(N-1+2k)\int
_{0}^{\infty}r^{N+2k-3}|w|^{2}dr\right)  \left(  \int_{0}^{\infty}%
r^{N+2k+1}|w|^{2}dr\right) \nonumber\label{ineq_2}\\
&  \geq C_{N,k}\left(  \int_{0}^{\infty}r^{N+2k-1}|w|^{2}dr\right)  ^{2}%
,\quad\forall k\geq0.
\end{align}
Fix $\varepsilon\gtrapprox0$. Let $w\left(  r\right)  =rv\left(  r\right)  $
on $\left[  \varepsilon,\infty\right)  $. Then for $r\in\left(  \varepsilon
,\infty\right)  $, we have
\[
w^{\prime}\left(  r\right)  =v\left(  r\right)  +rv^{\prime}\left(  r\right)
.
\]
Therefore,%
\begin{align*}
&  r^{N+2k-1}\left\vert w^{\prime}\left(  r\right)  \right\vert ^{2}%
+(N-1+2k)r^{N+2k-3}\left\vert w\left(  r\right)  \right\vert ^{2}\\
&  =(N+2k)r^{N+2k-1}v^{2}\left(  r\right)  +r^{N+2k+1}\left\vert v^{\prime
}\left(  r\right)  \right\vert ^{2}+2r^{N+2k}v\left(  r\right)  v^{\prime
}\left(  r\right)
\end{align*}
and
\begin{align*}
&  \int_{\varepsilon}^{\infty}\left(  r^{N+2k-1}\left\vert w^{\prime}\left(
r\right)  \right\vert ^{2}+(N-1+2k)r^{N+2k-3}\left\vert w\left(  r\right)
\right\vert ^{2}\right)  dr\\
&  =\int_{\varepsilon}^{\infty}\left(  (N+2k)r^{N+2k-1}\left\vert v\left(
r\right)  \right\vert ^{2}+r^{N+2k+1}\left\vert v^{\prime}\left(  r\right)
\right\vert ^{2}+2\operatorname{Re}r^{N+2k}\overline{v\left(  r\right)
}v^{\prime}\left(  r\right)  \right)  dr.
\end{align*}
By noting that
\begin{align*}
\operatorname{Re}\int_{\varepsilon}^{\infty}2r^{N+2k}\overline{v\left(
r\right)  }v^{\prime}\left(  r\right)   &  =\int_{\varepsilon}^{\infty
}r^{N+2k}\left\vert v^{2}\left(  r\right)  \right\vert ^{\prime}dr\\
&  =o_{\varepsilon}\left(  1\right)  -(N+2k)\int_{\varepsilon}^{\infty
}r^{N+2k-1}\left\vert v\left(  r\right)  \right\vert ^{2}dr,
\end{align*}
we get%
\[
\int_{\varepsilon}^{\infty}\left(  r^{N+2k-1}\left\vert w^{\prime}\left(
r\right)  \right\vert ^{2}+(N-1+2k)r^{N+2k-3}\left\vert w\left(  r\right)
\right\vert ^{2}\right)  dr=\int_{\varepsilon}^{\infty}r^{N+2k+1}\left\vert
v^{\prime}\left(  r\right)  \right\vert ^{2}dr+o_{\varepsilon}\left(
1\right)  .
\]
Hence, using integration by parts and H\"{o}lder inequality, we get
\begin{align*}
&  \left(  \int_{\varepsilon}^{\infty}\left(  r^{N+2k-1}\left\vert w^{\prime
}\left(  r\right)  \right\vert ^{2}+(N-1+2k)r^{N+2k-3}\left\vert w\left(
r\right)  \right\vert ^{2}\right)  dr\right)  \left(  \int_{\varepsilon
}^{\infty}r^{N+2k+1}|w|^{2}dr\right) \\
&  =\left(  \int_{\varepsilon}^{\infty}r^{N+2k+1}\left\vert v^{\prime}\left(
r\right)  \right\vert ^{2}dr+o_{\varepsilon}\left(  1\right)  \right)  \left(
\int_{\varepsilon}^{\infty}r^{N+2k+3}|v(r)|^{2}dr\right) \\
&  =\left(  \int_{\varepsilon}^{\infty}r^{N+2k+1}\left\vert v^{\prime}\left(
r\right)  \right\vert ^{2}dr\right)  \left(  \int_{\varepsilon}^{\infty
}r^{N+2k+3}|v(r)|^{2}dr\right)  +o_{\varepsilon}\left(  1\right) \\
&  \geq\left(  \int_{\varepsilon}^{\infty}r^{N+2k+2}\left\vert v(r)v^{\prime
}(r)\right\vert dr\right)  ^{2}+o_{\varepsilon}\left(  1\right) \\
&  =\frac{1}{4}\left(  \int_{\varepsilon}^{\infty}r^{N+2k+2}\left(
|v(r)|^{2}\right)  ^{\prime}dr\right)  ^{2}+o_{\varepsilon}\left(  1\right) \\
&  =\frac{\left(  N+2k+2\right)  ^{2}}{4}\left(  \int_{\varepsilon}^{\infty
}r^{N+2k+1}|v(r)|^{2}dr\right)  ^{2}+o_{\varepsilon}\left(  1\right) \\
&  =\frac{\left(  N+2k+2\right)  ^{2}}{4}\left(  \int_{\varepsilon}^{\infty
}r^{N+2k-1}|w(r)|^{2}dr\right)  ^{2}+o_{\varepsilon}\left(  1\right)  .
\end{align*}
Letting $\varepsilon\downarrow0$, we obtain%
\begin{align}
&  \left(  \int_{0}^{\infty}r^{N+2k-1}|w^{\prime}|^{2}dr+(N-1+2k)\int
_{0}^{\infty}r^{N+2k-3}|w|^{2}dr\right)  \left(  \int_{0}^{\infty}%
r^{N+2k+1}|w|^{2}dr\right) \nonumber\\
&  \geq\frac{\left(  N+2k+2\right)  ^{2}}{4}\left(  \int_{0}^{\infty
}r^{N+2k-1}|w|^{2}dr\right)  ^{2},\quad\forall k\geq0. \label{4.1}%
\end{align}
Therefore, by Lemma \ref{l3.1},%
\[
\mu^{\star\ast}(N)\geq\inf_{k\in%
\mathbb{N}
\cup\left\{  0\right\}  }\left(  1-\frac{8k}{(N+2k)^{2}}\right)  \frac{\left(
N+2k+2\right)  ^{2}}{4}=\frac{(N+2)^{2}}{4}.
\]
In other words, for all $u\in C_{c}^{\infty}(\mathbb{R}^{N})$, we have%
\[
\int_{\mathbb{R}^{N}}|\Delta u|^{2}\mathrm{dx}\int_{\mathbb{R}^{N}}%
|x|^{2}|\nabla u|^{2}\mathrm{dx}\geq\frac{(N+2)^{2}}{4}\left(  \int
_{\mathbb{R}^{N}}|\nabla u|^{2}\mathrm{dx}\right)  ^{2}.
\]
Hence, by standard density argument, for all $u\in\mathcal{S}(\mathbb{R}^{N}%
)$, we have%
\[
\int_{\mathbb{R}^{N}}|\Delta u|^{2}\mathrm{dx}\int_{\mathbb{R}^{N}}%
|x|^{2}|\nabla u|^{2}\mathrm{dx}\geq\frac{(N+2)^{2}}{4}\left(  \int
_{\mathbb{R}^{N}}|\nabla u|^{2}\mathrm{dx}\right)  ^{2}.
\]

\paragraph{\textbf{Attainability of the best constant }$\frac{(N+2)^{2}}{4}$}

Now, we show that \eqref{main1} is attained by Gaussian profiles of the form
$u(x)=\alpha e^{-\beta|x|^{2}}$, $\beta>0$, $\alpha\in\mathbb{%
\mathbb{C}
}$. Indeed, direct computations yield%
\begin{align*}
&  \int_{\mathbb{R}^{N}}|\Delta u|^{2}\mathrm{dx}\\
&  =\left\vert \mathbb{S}^{N-1}\right\vert \int_{0}^{\infty}\left\vert
u^{\prime\prime}\left(  r\right)  +\frac{N-1}{r}u^{\prime}\left(  r\right)
\right\vert ^{2}r^{N-1}dr\\
&  =\left\vert \mathbb{S}^{N-1}\right\vert \int_{0}^{\infty}\left\vert
u^{\prime\prime}\left(  r\right)  +\frac{N-1}{r}u^{\prime}\left(  r\right)
\right\vert ^{2}r^{N-1}dr\\
&  =4\left\vert \alpha\right\vert ^{2}\beta^{2}\left\vert \mathbb{S}%
^{N-1}\right\vert \int_{0}^{\infty}\left\vert \left(  2\beta r^{2}-N\right)
\right\vert ^{2}e^{-2\beta r^{2}}r^{N-1}dr\\
&  =4\left\vert \alpha\right\vert ^{2}\beta^{2}\left\vert \mathbb{S}%
^{N-1}\right\vert \left[  4\beta^{2}\int_{0}^{\infty}e^{-2\beta r^{2}}%
r^{N+3}dr-4\beta N\int_{0}^{\infty}e^{-2\beta r^{2}}r^{N+1}dr+N^{2}\int
_{0}^{\infty}e^{-2\beta r^{2}}r^{N-1}dr\right]  .
\end{align*}
Note that
\[
\int_{0}^{\infty}e^{-2\beta r^{2}}r^{N-1}dr=\frac{1}{\left(  2\beta\right)
^{\frac{N}{2}}}\int_{0}^{\infty}e^{-t^{2}}t^{N-1}dt=\frac{1}{\left(
2\beta\right)  ^{\frac{N}{2}}}\frac{\Gamma\left(  \frac{N}{2}\right)  }{2},
\]%
\[
\int_{0}^{\infty}e^{-2\beta r^{2}}r^{N+1}dr=\frac{1}{\left(  2\beta\right)
^{\frac{N}{2}+1}}\int_{0}^{\infty}e^{-t^{2}}t^{N+1}dt=\frac{1}{\left(
2\beta\right)  ^{\frac{N}{2}+1}}\frac{\Gamma\left(  \frac{N}{2}+1\right)  }%
{2},
\]
and%
\[
\int_{0}^{\infty}e^{-2\beta r^{2}}r^{N+3}dr=\frac{1}{\left(  2\beta\right)
^{\frac{N}{2}+2}}\int_{0}^{\infty}e^{-t^{2}}t^{N+3}dt=\frac{1}{\left(
2\beta\right)  ^{\frac{N}{2}+2}}\frac{\Gamma\left(  \frac{N}{2}+2\right)  }%
{2}.
\]
Hence%
\[
\int_{\mathbb{R}^{N}}|\Delta u|^{2}\mathrm{dx}=4\left\vert \alpha\right\vert
^{2}\beta^{2}\left\vert \mathbb{S}^{N-1}\right\vert \frac{1}{\left(
2\beta\right)  ^{\frac{N}{2}}}\frac{N\left(  N+2\right)  }{4}\frac
{\Gamma\left(  \frac{N}{2}\right)  }{2}.
\]
Also%
\begin{align*}
\int_{\mathbb{R}^{N}}|x|^{2}|\nabla u|^{2}\mathrm{dx} &  =\left\vert
\mathbb{S}^{N-1}\right\vert \int_{0}^{\infty}\left\vert u^{\prime}\left(
r\right)  \right\vert ^{2}r^{N+1}dr\\
&  =4\left\vert \alpha\right\vert ^{2}\beta^{2}\left\vert \mathbb{S}%
^{N-1}\right\vert \int_{0}^{\infty}e^{-2\beta r^{2}}r^{N+3}dr\\
&  =4\left\vert \alpha\right\vert ^{2}\beta^{2}\left\vert \mathbb{S}%
^{N-1}\right\vert \frac{1}{\left(  2\beta\right)  ^{\frac{N}{2}}}%
\frac{N\left(  N+2\right)  }{16\beta^{2}}\frac{\Gamma\left(  \frac{N}%
{2}\right)  }{2}%
\end{align*}
and%
\begin{align*}
\int_{\mathbb{R}^{N}}|\nabla u|^{2}\mathrm{dx} &  =\left\vert \mathbb{S}%
^{N-1}\right\vert \int_{0}^{\infty}\left\vert u^{\prime}\left(  r\right)
\right\vert ^{2}r^{N-1}dr\\
&  =4\left\vert \alpha\right\vert ^{2}\beta^{2}\left\vert \mathbb{S}%
^{N-1}\right\vert \int_{0}^{\infty}e^{-2\beta r^{2}}r^{N+1}dr\\
&  =4\left\vert \alpha\right\vert ^{2}\beta^{2}\left\vert \mathbb{S}%
^{N-1}\right\vert \frac{1}{\left(  2\beta\right)  ^{\frac{N}{2}}}\frac
{N}{4\beta}\frac{\Gamma\left(  \frac{N}{2}\right)  }{2}%
\end{align*}
Therefore%
\[
\frac{\int_{\mathbb{R}^{N}}|\Delta u|^{2}\mathrm{dx}\int_{\mathbb{R}^{N}%
}|x|^{2}|\nabla u|^{2}\mathrm{dx}}{\left(  \int_{\mathbb{R}^{N}}|\nabla
u|^{2}\mathrm{dx}\right)  ^{2}}=\frac{\frac{N\left(  N+2\right)  }{4}%
\frac{N\left(  N+2\right)  }{16\beta^{2}}}{\left(  \frac{N}{4\beta}\right)
^{2}}=\frac{(N+2)^{2}}{4}\text{.}%
\]

\subsection{Proof of Theorem \ref{th2}}

\paragraph{\textbf{The case }$N=1$}

In this case, it is obvious that $\left\vert \mathcal{R}_{2}u\right\vert
=\left\vert \Delta u\right\vert =\left\vert u^{\prime\prime}\right\vert $ and
$\left\vert \mathcal{R}_{1}u\right\vert =\left\vert \nabla u\right\vert
=\left\vert u^{\prime}\right\vert $. Therefore \eqref{main2} is a direct
consequence of \eqref{main1}.

\paragraph{\textbf{The case }$N\geq2$}

Let $u\in C_{c}^{\infty}(\mathbb{R}^{N})$. By Lemma \ref{l3.2}, inequality
\eqref{main2} is equivalent to
\begin{multline}
\sum_{k=0}^{\infty}\left(  \int_{0}^{\infty}\left\vert u_{k}^{\prime\prime
}\left(  r\right)  \right\vert ^{2}r^{N-1}dr+\left(  N-1\right)  \int
_{0}^{\infty}\left\vert u_{k}^{\prime}\left(  r\right)  \right\vert
^{2}r^{N-3}dr\right)  \\
\times\sum_{k=0}^{\infty}\int_{0}^{\infty}\left\vert u_{k}^{\prime}\left(
r\right)  \right\vert ^{2}r^{N+1}dr\\
\geq\frac{(N+2)^{2}}{4}\left(
{\displaystyle\sum\limits_{k=0}^{\infty}}
\int_{0}^{\infty}\left\vert u_{k}^{\prime}\left(  r\right)  \right\vert
^{2}r^{N-1}dr\right)  ^{2}.
\end{multline}
Therefore, it is enough to show that for all $v\in C_{c}^{\infty}%
(\mathbb{R}^{N}):$
\begin{multline}
\left(  \int_{0}^{\infty}\left\vert v^{\prime\prime}\left(  r\right)
\right\vert ^{2}r^{N-1}dr+\left(  N-1\right)  \int_{0}^{\infty}\left\vert
v^{\prime}\left(  r\right)  \right\vert ^{2}r^{N-3}dr\right)  \\
\times\left(  \int_{0}^{\infty}\left\vert v^{\prime}\left(  r\right)
\right\vert ^{2}r^{N+1}dr\right)  \\
\geq\frac{(N+2)^{2}}{4}\left(  \int_{0}^{\infty}\left\vert v^{\prime}\left(
r\right)  \right\vert ^{2}r^{N-1}dr\right)  ^{2}.
\end{multline}
But this is just (\ref{4.0}) with $k=0$. Therefore, by density argument,
\eqref{main2} also holds for $u\in\mathcal{S}(\mathbb{R}^{N})$.

Also, it is easy to check that with $u\left(  x\right)  =\alpha e^{-\beta
|x|^{2}}$, $\beta>0$, $\alpha\in\mathbb{%
\mathbb{C}
}$, then
\[
\int_{\mathbb{R}^{N}}|\mathcal{R}_{2}u|^{2}\mathrm{dx}\int_{\mathbb{R}^{N}%
}|x|^{2}|\mathcal{R}_{1}u|^{2}\mathrm{dx}=\frac{(N+2)^{2}}{4}\left(
\int_{\mathbb{R}^{N}}|\mathcal{R}_{1}u|^{2}\mathrm{dx}\right)  ^{2}.
\]

\subsection{Proof of Theorem \ref{th3}}

Let $u\in C_{c}^{\infty}(\mathbb{R}^{N})$. By Lemma \ref{l3.1}, inequality
\eqref{main4} is equivalent to
\begin{multline}
\sum_{k=0}^{\infty}\left(  \int_{0}^{\infty}r^{N+2k-1}|v_{k}^{\prime\prime
}|^{2}dr+(N-1+2k)\int_{0}^{\infty}r^{N+2k-3}|v_{k}^{\prime}|^{2}dr\right)  \\
\times\sum_{k=0}^{\infty}\left(  \int_{0}^{\infty}r^{N+2k-1}|v_{k}^{\prime
}|^{2}dr\right)  \\
\geq\nu^{\star\ast}(N)\left(  \sum_{k=0}^{\infty}\int_{0}^{\infty}%
r^{N+2k-2}|v_{k}^{\prime}|^{2}dr+k\int_{0}^{\infty}r^{N+2k-4}\left\vert
v_{k}\right\vert ^{2}dr\right)  ^{2}.
\end{multline}
Due to Cauchy-Bunyakovsky-Schwarz inequality in order to justify
\eqref{main4}, it is enough to show for any $k\geq0$ that
\begin{multline}
\left(  \int_{0}^{\infty}r^{N+2k-1}|v_{k}^{\prime\prime}|^{2}dr+(N-1+2k)\int
_{0}^{\infty}r^{N+2k-3}|v_{k}^{\prime}|^{2}dr\right)  \\
\times\left(  \int_{0}^{\infty}r^{N+2k-1}|v_{k}^{\prime}|^{2}dr\right)  \\
\geq\nu^{\star\ast}(N)\left(  \int_{0}^{\infty}r^{N+2k-2}|v_{k}^{\prime}%
|^{2}dr+k\int_{0}^{\infty}r^{N+2k-4}\left\vert v_{k}\right\vert ^{2}dr\right)
^{2}.
\end{multline}
For $k\geq1$, we use the following weighted 1-d Hardy inequality
\begin{equation}
\int_{0}^{\infty}r^{N+2k-2}|v_{k}^{\prime}|^{2}dr\geq\frac{(N+2k-3)^{2}}%
{4}\int_{0}^{\infty}r^{N+2k-4}\left\vert v_{k}\right\vert ^{2}dr.
\end{equation}
Therefore, it is now enough to prove that for all $k\geq0:$
\begin{align}
&  \left(  \int_{0}^{\infty}r^{N+2k-1}|v_{k}^{\prime\prime}|^{2}%
dr+(N-1+2k)\int_{0}^{\infty}r^{N+2k-3}|v_{k}^{\prime}|^{2}dr\right)
\nonumber\\
&  \times\left(  \int_{0}^{\infty}r^{N+2k-1}|v_{k}^{\prime}|^{2}dr\right)
\geq\nu^{\star\ast}(N)\left(  1+\frac{4k}{(N+2k-3)^{2}}\right)  ^{2}\left(
\int_{0}^{\infty}r^{N+2k-2}|v_{k}^{\prime}|^{2}dr\right)  ^{2}\text{.}%
\end{align}
For that we will try to get the best constant $C_{N,k}$ in the inequality
\begin{align}
&  \left(  \int_{0}^{\infty}r^{N+2k-1}|v_{k}^{\prime\prime}|^{2}%
dr+(N-1+2k)\int_{0}^{\infty}r^{N+2k-3}|v_{k}^{\prime}|^{2}dr\right)  \left(
\int_{0}^{\infty}r^{N+2k-1}|v_{k}^{\prime}|^{2}dr\right)  \nonumber\\
&  \geq C_{N,k}\left(  \int_{0}^{\infty}r^{N+2k-2}|v_{k}^{\prime}%
|^{2}dr\right)  ^{2},\quad\forall k\geq0.\label{ineq_key_1}%
\end{align}
Then
\[
\nu^{\star\ast}(N)\geq\inf_{k\geq0}\frac{C_{N,k}}{\left(  1+\frac
{4k}{(N+2k-3)^{2}}\right)  ^{2}}.
\]
Denoting $w=v_{k}^{\prime}$, \eqref{ineq_key_1} reduces to%

\begin{align}
&  \left(  \int_{0}^{\infty}r^{N+2k-1}|w^{\prime}|^{2}dr+(N-1+2k)\int
_{0}^{\infty}r^{N+2k-3}|w|^{2}dr\right)  \left(  \int_{0}^{\infty}%
r^{N+2k-1}|w|^{2}dr\right)  \nonumber\\
&  \geq C_{N,k}\left(  \int_{0}^{\infty}r^{N+2k-2}|w|^{2}dr\right)  ^{2}%
,\quad\forall k\geq0.
\end{align}
Fix $\varepsilon\gtrapprox0$. Let $w\left(  r\right)  =rv\left(  r\right)  $
on $\left[  \varepsilon,\infty\right)  $. Then for $r\in\left(  \varepsilon
,\infty\right)  $, we have
\[
\int_{\varepsilon}^{\infty}\left(  r^{N+2k-1}\left\vert w^{\prime}\left(
r\right)  \right\vert ^{2}+(N-1+2k)r^{N+2k-3}\left\vert w\left(  r\right)
\right\vert ^{2}\right)  dr=\int_{\varepsilon}^{\infty}r^{N+2k+1}\left\vert
v^{\prime}\left(  r\right)  \right\vert ^{2}dr+o_{\varepsilon}\left(
1\right)  .
\]
Hence, using integration by parts and H\"{o}lder inequality, we get
\begin{align*}
&  \left(  \int_{\varepsilon}^{\infty}\left(  r^{N+2k-1}\left\vert w^{\prime
}\left(  r\right)  \right\vert ^{2}+(N-1+2k)r^{N+2k-3}\left\vert w\left(
r\right)  \right\vert ^{2}\right)  dr\right)  \left(  \int_{\varepsilon
}^{\infty}r^{N+2k-1}|w|^{2}dr\right)  \\
&  =\left(  \int_{\varepsilon}^{\infty}r^{N+2k+1}\left\vert v^{\prime}\left(
r\right)  \right\vert ^{2}dr+o_{\varepsilon}\left(  1\right)  \right)  \left(
\int_{\varepsilon}^{\infty}r^{N+2k+1}|v(r)|^{2}dr\right)  \\
&  =\left(  \int_{\varepsilon}^{\infty}r^{N+2k+1}\left\vert v^{\prime}\left(
r\right)  \right\vert ^{2}dr\right)  \left(  \int_{\varepsilon}^{\infty
}r^{N+2k+1}|v(r)|^{2}dr\right)  +o_{\varepsilon}\left(  1\right)  \\
&  \geq\left(  \int_{\varepsilon}^{\infty}r^{N+2k+1}\left\vert v(r)v^{\prime
}(r)\right\vert dr\right)  ^{2}+o_{\varepsilon}\left(  1\right)  \\
&  =\frac{1}{4}\left(  \int_{\varepsilon}^{\infty}r^{N+2k+1}\left(
|v(r)|^{2}\right)  ^{\prime}dr\right)  ^{2}+o_{\varepsilon}\left(  1\right)
\\
&  =\frac{\left(  N+2k+1\right)  ^{2}}{4}\left(  \int_{\varepsilon}^{\infty
}r^{N+2k}|v(r)|^{2}dr\right)  ^{2}+o_{\varepsilon}\left(  1\right)  \\
&  =\frac{\left(  N+2k+1\right)  ^{2}}{4}\left(  \int_{\varepsilon}^{\infty
}r^{N+2k-2}|w(r)|^{2}dr\right)  ^{2}+o_{\varepsilon}\left(  1\right)  .
\end{align*}
Letting $\varepsilon\downarrow0$, we obtain%
\begin{align}
&  \left(  \int_{0}^{\infty}r^{N+2k-1}|w^{\prime}|^{2}dr+(N-1+2k)\int
_{0}^{\infty}r^{N+2k-3}|w|^{2}dr\right)  \left(  \int_{0}^{\infty}%
r^{N+2k-1}|w|^{2}dr\right)  \nonumber\\
&  \geq\frac{\left(  N+2k+1\right)  ^{2}}{4}\left(  \int_{0}^{\infty
}r^{N+2k-2}|w|^{2}dr\right)  ^{2},\quad\forall k\geq0.
\end{align}
Therefore, by Lemma \ref{l3.4}, we get
\begin{align*}
\nu^{\star\ast}(N) &  \geq\inf_{k\geq0}\frac{\frac{\left(  N+2k+1\right)
^{2}}{4}}{\left(  1+\frac{4k}{(N+2k-3)^{2}}\right)  ^{2}}\\
&  =\inf_{k\geq0}\frac{\left(  N+2k+1\right)  ^{2}}{4}\frac{(N+2k-3)^{4}%
}{\left(  (N+2k-3)^{2}+4k\right)  ^{2}}\\
&  =\frac{(N+1)^{2}}{4}\text{ for }N\geq5.
\end{align*}
Hence, by standard density argument, we have for $u\in W^{2,2}\left(
\mathbb{R}^{N}\right)  $ that
\begin{equation}
\int_{\mathbb{R}^{N}}|\Delta u|^{2}\mathrm{dx}\int_{\mathbb{R}^{N}}|\nabla
u|^{2}\mathrm{dx}\geq\frac{(N+1)^{2}}{4}\left(  \int_{\mathbb{R}^{N}}%
\frac{|\nabla u|^{2}}{\left\vert x\right\vert }\mathrm{dx}\right)  ^{2}.
\end{equation}

\paragraph{\textbf{Attainability of the best constant }$\frac{(N+1)^{2}}{4}$}

Now, we show that \eqref{main4} is attained by the function $u\left(
x\right)  =\alpha\left(  1+\beta\left\vert x\right\vert \right)  e^{-\beta
|x|}$, $\beta>0$, $\alpha\in\mathbb{%
\mathbb{C}
}$. Indeed, direct computations yield%
\begin{align*}
&  \int_{\mathbb{R}^{N}}|\Delta u|^{2}\mathrm{dx}\\
&  =\left\vert \mathbb{S}^{N-1}\right\vert \int_{0}^{\infty}\left\vert
u^{\prime\prime}\left(  r\right)  +\frac{N-1}{r}u^{\prime}\left(  r\right)
\right\vert ^{2}r^{N-1}dr\\
&  =\left\vert \alpha\right\vert ^{2}\left\vert \beta\right\vert
^{4}\left\vert \mathbb{S}^{N-1}\right\vert \int_{0}^{\infty}\left\vert \beta
r-N\right\vert ^{2}e^{-2\beta r}r^{N-1}dr\\
&  =\left\vert \alpha\right\vert ^{2}\left\vert \beta\right\vert
^{4}\left\vert \mathbb{S}^{N-1}\right\vert \left[  \left\vert \beta\right\vert
^{2}\int_{0}^{\infty}e^{-2\beta r}r^{N+1}dr-2\beta N\int_{0}^{\infty
}e^{-2\beta r}r^{N}dr+N^{2}\int_{0}^{\infty}e^{-2\beta r}r^{N-1}dr\right]  .
\end{align*}
Note that
\[
\int_{0}^{\infty}e^{-2\beta r}r^{N+1}dr=\frac{1}{\beta^{N+2}}\int_{0}^{\infty
}e^{-2r}r^{N+1}dr=\frac{\left(  N+1\right)  !}{\beta^{N+2}2^{N+2}},
\]%
\[
\int_{0}^{\infty}e^{-2\beta r}r^{N}dr=\frac{1}{\beta^{N+1}}\int_{0}^{\infty
}e^{-2r}r^{N}dr=\frac{N!}{\beta^{N+1}2^{N+1}},
\]
and%
\[
\int_{0}^{\infty}e^{-2\beta r}r^{N-1}dr=\frac{1}{\beta^{N}}\int_{0}^{\infty
}e^{-2r}r^{N-1}dr=\frac{\left(  N-1\right)  !}{\beta^{N}2^{N}}.
\]
Hence%
\[
\int_{\mathbb{R}^{N}}|\Delta u|^{2}\mathrm{dx}=\left\vert \alpha\right\vert
^{2}\left\vert \beta\right\vert ^{4}\left\vert \mathbb{S}^{N-1}\right\vert
\frac{\left(  N+1\right)  !}{\beta^{N}2^{N+2}}.
\]
Also,
\begin{align*}
\int_{\mathbb{R}^{N}}\frac{|\nabla u|^{2}}{\left\vert x\right\vert
}\mathrm{dx} &  =\left\vert \mathbb{S}^{N-1}\right\vert \int_{0}^{\infty
}\left\vert u^{\prime}\left(  r\right)  \right\vert ^{2}r^{N-2}dr\\
&  =\left\vert \alpha\right\vert ^{2}\left\vert \beta\right\vert
^{4}\left\vert \mathbb{S}^{N-1}\right\vert \int_{0}^{\infty}e^{-2\beta r}%
r^{N}dr\\
&  =\left\vert \alpha\right\vert ^{2}\left\vert \beta\right\vert
^{4}\left\vert \mathbb{S}^{N-1}\right\vert \frac{N!}{\beta^{N+1}2^{N+1}}%
\end{align*}
and%
\begin{align*}
\int_{\mathbb{R}^{N}}|\nabla u|^{2}\mathrm{dx} &  =\left\vert \mathbb{S}%
^{N-1}\right\vert \int_{0}^{\infty}\left\vert u^{\prime}\left(  r\right)
\right\vert ^{2}r^{N-1}dr\\
&  =\left\vert \alpha\right\vert ^{2}\left\vert \beta\right\vert
^{4}\left\vert \mathbb{S}^{N-1}\right\vert \int_{0}^{\infty}e^{-2\beta
r}r^{N+1}dr\\
&  =\left\vert \alpha\right\vert ^{2}\left\vert \beta\right\vert
^{4}\left\vert \mathbb{S}^{N-1}\right\vert \frac{\left(  N+1\right)  !}%
{\beta^{N+2}2^{N+2}}%
\end{align*}
Therefore
\[
\frac{\int_{\mathbb{R}^{N}}|\Delta u|^{2}\mathrm{dx}\int_{\mathbb{R}^{N}%
}|\nabla u|^{2}\mathrm{dx}}{\left(  \int_{\mathbb{R}^{N}}\frac{|\nabla u|^{2}%
}{\left\vert x\right\vert }\mathrm{dx}\right)  ^{2}}=\frac{\frac{\left(
N+1\right)  !}{2^{N+2}}\frac{\left(  N+1\right)  !}{2^{N+2}}}{\left(
\frac{N!}{2^{N+1}}\right)  ^{2}}=\frac{(N+1)^{2}}{4}\text{.}%
\]

\subsection{Proof of Theorem \ref{th4}}

Let $u\in C_{c}^{\infty}(\mathbb{R}^{N})$. By Lemma \ref{l3.2}, inequality
\eqref{main5} is equivalent to
\begin{multline}
\sum_{k=0}^{\infty}\left(  \int_{0}^{\infty}\left\vert u_{k}^{\prime\prime
}\left(  r\right)  \right\vert ^{2}r^{N-1}dr+\left(  N-1\right)  \int
_{0}^{\infty}\left\vert u_{k}^{\prime}\left(  r\right)  \right\vert
^{2}r^{N-3}dr\right)  \\
\times\sum_{k=0}^{\infty}\left(  \int_{0}^{\infty}\left\vert u_{k}^{\prime
}\left(  r\right)  \right\vert ^{2}r^{N-1}dr\right)  \\
\geq\frac{(N+1)^{2}}{4}\left(  \sum_{k=0}^{\infty}\int_{0}^{\infty}\left\vert
u_{k}^{\prime}\left(  r\right)  \right\vert ^{2}r^{N-2}dr\right)  ^{2}%
\end{multline}
Obviously, it is enough to show for any $k\geq0$ that
\begin{multline}
\left(  \int_{0}^{\infty}\left\vert u_{k}^{\prime\prime}\left(  r\right)
\right\vert ^{2}r^{N-1}dr+\left(  N-1\right)  \int_{0}^{\infty}\left\vert
u_{k}^{\prime}\left(  r\right)  \right\vert ^{2}r^{N-3}dr\right)  \\
\times\left(  \int_{0}^{\infty}\left\vert u_{k}^{\prime}\left(  r\right)
\right\vert ^{2}r^{N-1}dr\right)  \\
\geq\frac{(N+1)^{2}}{4}\left(  \int_{0}^{\infty}\left\vert u_{k}^{\prime
}\left(  r\right)  \right\vert ^{2}r^{N-2}dr\right)  ^{2}.
\end{multline}
But this is just (\ref{ineq_key_1}) with $k=0$. Also, it is easy to see that
with $u\left(  x\right)  =\alpha\left(  1+\beta\left\vert x\right\vert
\right)  e^{-\beta|x}$, $\beta>0$, $\alpha\in\mathbb{%
\mathbb{C}
}$, then
\[
\int_{\mathbb{R}^{N}}|\mathcal{R}_{2}u|^{2}\mathrm{dx}\int_{\mathbb{R}^{N}%
}|\mathcal{R}_{1}u|^{2}\mathrm{dx}=\frac{(N+1)^{2}}{4}\left(  \int
_{\mathbb{R}^{N}}\frac{|\mathcal{R}_{1}u|^{2}}{\left\vert x\right\vert
}\mathrm{dx}\right)  ^{2}.
\]

\end{document}